\newcolumntype{x}[1]{%
{\raggedleft\hspace{0pt}}p{#1}}%
\definecolor{gris5}{gray}{0.5}
\newcommand{\R}{{\fontfamily{cmss}\fontseries{sbc}\fontshape{n}\selectfont R}\xspace}
\newcommand{\transp}{^{\text{\fontfamily{cmss}\fontseries{sbc}\fontshape{n}\selectfont T}}}
\newcommand{\vect}[1]{\lowercase{\boldsymbol{#1}}}
\newcommand{\mat}[1]{\boldsymbol{\mathcal{\uppercase{#1}}}}
\newcommand{\aleasca}[1]{\normalfont\text{\small \uppercase{#1}}}
\newcommand{\aleavect}[1]{\normalfont\textbf{\small\uppercase{#1}}}
\newcommand{\Log}{\textrm{Log}}
\newcommand{\Argmax}{\textrm{Argmax}}
\newtheorem{thm}{Theorem}
\author{C\'ecile Bordier\\GIN \And 
        Michel Dojat\\GIN \And
Pierre Lafaye de Micheaux\\Universit\'e de Montr\'eal}
\title{Temporal and Spatial Independent Component Analysis for fMRI data sets embedded in a \R package}
\email{Michel.Dojat@ujf-grenoble.fr}\\
\email{lafaye@dms.umontreal.ca}\\
\begin{document}

%% include your article here, just as usual
%% Note that you should use the \pkg{}, \proglang{} and \code{} commands.

%\section[About Java]{About \proglang{Java}}
%% Note: If there is markup in \(sub)section, then it has to be escape as above.

\section{Introduction}

Magnetic Resonance Imaging (MRI) is now a prominent non-invasive neuroimaging technique largely used in clinical routine and advanced brain research. Its success is largely due to a combination of at least three factors: 1) sensitivity of MR signal to various physiological parameters that characterize normal or pathological living tissues (such as diffusion properties of  $H_20$ molecules, relaxation time of proton magnetization  or blood oxygenation) leading to a vast panoply of MRI modalities (respectively restricted in our example to diffusion MR imaging, weighted structural images and functional MRI); 2) constant hardware improvements (e.g. mastering high field homogeneous magnets  and high linear magnetic field gradients respectively allows an increasing of spatial resolution or a reduction of acquisition time); and 3) sustained efforts in various laboratories to develop robust software: for image processing (to de-noise, segment, realign, fusion or visualize MR brain images), for computational anatomy leading to the exploration of brain structure modifications during learning, brain development or pathology evolution and for time course analysis of functional MRI data. Statisticians play a key role in this last factor since data produced are complex: noisy, highly variable between subjects, massive and, for functional data, highly correlated both spatially and temporally (\cite{Lange03}). \\ 
Functional MRI (fMRI) allows to detect the variations of cerebral blood oxygen level induced by the brain activity of a subject, lying inside a MRI scanner, in response to various sensory-motor or cognitive tasks (\cite{Chen99}). The fMRI signal is based on changes in magnetic susceptibility of the blood during brain activation. It is a non-invasive and indirect detection of brain activity: the signal detected is filtered by the hemodynamic response function (HRF) and the neuro-vascular coupling is  only partially explained (\cite{Logothetis04}). The main goal of fMRI experiments is to explore, in a reproducible way, the cortical networks implicated in pre-defined stimulation tasks in a cohort of normal or pathological subjects. The low signal to noise ratio obtained in functional images requires to repeat the sequence of stimuli several times (\cite{Henson04}) and to enroll a sufficient number of subjects (\cite{Thirion07}). In general, the data resulting from an fMRI experiment consist in a set indexed with time (typically many hundred) of 3D dimensional functional images with a $3\times3\times3~\text{mm}^3$ spatial resolution, and in a structural (or anatomical) image with a $1\times1\times1~\text{mm}^3$ resolution used to accurately localize functional activations. Note that a 3D image is in fact an array of many voxels's intensities. Various pre-processing steps are required to correct functional images from possible head subject movement, to realign functional and anatomical individual images and, for group studies, all individual data sets in a common referential. A spatial smoothing (e.g. using a gaussian kernel) is generally applied to functional images to compensate for potential mis-realignment and enhance the signal-to-noise ratio.  \\
Several frameworks have been proposed to date for statistical analysis of these pre-processed sets of functional data (see \cite{Lazar08}'s book for a recent review). The commonly used statistical approach, massively univariate, considers each voxel independently from each other using regression techniques (\cite{Friston95}); \cite{Bullmore97}). It is available in freeware packages such as FSL (\url{http://www.fmrib.ox.ac.uk/fsl/}), SPM (\url{http://www.fil.ion.ucl.ac.uk/spm/}), BrainVisa (\url{http://brainvisa.info/}) or NIPY (\url{http://nipy.sourceforge.net/}). 
The time series response at each voxel is modeled as a stationary linear filter where the finite impulse response corresponds to a model of the HRF. This leads to the specification of a general linear model (noted GLM thereafter; not to be confounded with the Generalized Linear Model) where the design matrix contains, for each time point, the occurrences of the successive stimuli (regressors) convolved with the HRF model. Other regressors can be seamlessly introduced  to model possible confounds. Many refinements to this approach have been proposed (\cite{Nichols02}; \cite{Friston05}; \cite{Roche07}). Spatial smoothness of the activated areas, normal distribution and independence of the error terms and a predefined form of the HRF used as a convolution kernel are the main  \textit{a priori} incorporated into the GLM. This model-driven approach allows to test, using standard Student or Fisher tests, the activated regions against a desired hypothesis by specifying compositions of regressors.  It is largely used essentially because of its flexibility in model specification allowing to test various hypothesis represented in corresponding statistical parametric maps. Clearly, the validity of the interpretation of these maps depends on the accuracy of the specified model. \\
An alternative exploratory (data-driven) approach relies on multivariate analysis based on Independent Component Analysis (ICA). ICA performs a blind separation of independent sources from a complex mixture of many sources of signal and noise. In this approach, relying on the intrinsic structure of the data, no assumptions about the form of the HRF or the possible  causes of responses are inserted. Only the number of sources or components to search  for could eventually be specified. To identify a number of unknown sources of signal, ICA assumes that these sources are mutually and statistically independent in space (sICA) or time (tICA). This assumption is particularly relevant to biological time-series (\cite{Friston98}). For fMRI data set analyses,  sICA is preferred because temporal points (few hundreds, corresponding to each occurrence of a functional image acquisition) are small compared to spatial ones (more than $10^5$, corresponding to the number of voxels contained in a functional image) leading for tICA to a computationnaly intractable mixing matrix (\cite{McKeown95}). However, temporal ICA could be relevant for some neuroscientific applications where temporal independence of sources can be assumed  (\cite{Calhoun01b}). In this context, these authors wrote \textit{``... Note that tICA is typically much more computationally
demanding than sICA for functional MRI applications because of a higher spatial than temporal dimension and can grow quickly beyond practical feasibility. Thus a covariance matrix on the order of $N^2$ (where $N$ is the number of spatial voxels of interest) must be calculated. A combination of increased hardware capacity
as well as more advanced methods for calculating and storing the covariance matrix may provide a solution in the future ..."}. In this paper, we propose to use a classical linear algebra result to alleviate the aforementioned computational burden. \\

The paper is structured as follows. First, in Section 2 we briefly describe the principle of temporal and spatial ICA in the context of fMRI data set analysis and detail the mathematical developments we propose for ensuring temporal ICA tractability.  In Section 3, we describe the current version of the \pkg{AnalyzeFMRI} \R package (see \cite{AnalyzeFMRI}), which is the first \R package designed for the processing and analysis of large anatomical and functional MRI data sets. It was initiated by J. Marchini (\cite{Marchini02}), who passed the torch in 2007 to the third author of this paper. This package includes, compared to its initial version, our recent extensions: i.e. NIFTI format management, cross-platform visualization based on Tcl-Tk components and temporal (and spatial) ICA (TS-ICA). We report, in Section 4, results using synthetic data and  real MRI data sets coming from human visual experiments, obtained using TS-ICA.  Finally, we conclude about the interest of the \pkg{AnalyzeFMRI} package and our extensions for the exploration of MRI
data and outline our plans for future extensions.

\section{Spatial and Temporal Independent Component Analysis}
Independent component analysis (ICA) is a statistical technique whose aim is to recover hidden underlying source signals from an observed mixture of these sources. In standard ICA, the mixture is supposed to be linear and the only hypothesis made to solve this problem (known as the blind source separation problem) is that the sources are statistically mutually independent and are not Gaussian. \\

The generative linear instantaneous noise-free mixing ICA model is generally written under the form
\begin{equation}\label{model1}
\aleavect{x}=\mat{A}\aleavect{s}
\end{equation}
where $\aleavect{x}=(\aleasca{x}_1,\ldots,\aleasca{x}_m)\transp$ is the $m\times1$ continuous-valued random vector of the observable  signals, $\mat{A}=(a_{ij})$ is the unknown constant (non random) and invertible square mixing matrix of size $m\times m$ and $\aleavect{s}=(\aleasca{s}_1,\ldots,\aleasca{s}_m)\transp$ is the $m\times1$ continuous-valued random vector of the $m$ unknown source signals to be recovered. Note that if we denote by $\mat{B}$ the inverse of matrix $\mat{A}$, then we can write $\aleavect{s}=\mat{B}\aleavect{x}$. The term ``recover'' here means that we want to be able, based on an observed sample $\vect{x}_1,\ldots,\vect{x}_n$ (possibly organised in a matrix $\mat{x}$ of size $n\times m$) of the random vector $\aleavect{x}$, to estimate the densities $f_{\aleasca{s}_j}$ of the $m$ sources $\aleasca{s}_j$, or at least to be able to build an ``observed'' sample of size $n$ of each one of these $m$ sources, which are usually called the independent (extracted) components. For example, this sample could be computed, if one has an estimate $\hat{\mat{B}}$ of the separating matrix $\mat{B}$, as $\vect{s}_1,\ldots,\vect{s}_n$ where $\vect{s}_i=\hat{\mat{B}}\vect{x}_i$, $1\leq i\leq n$.\\

Note also that, using the independence property of the sources, the density of the random vector $\aleavect{x}$ can be expressed as
\[
 f_{\aleavect{x}}(\vect{x})=|\mat{A}^{-1}|f_{\aleavect{s}}(\vect{s})=|\mat{B}|\prod_{j=1}^m f_{\aleasca{s}_j}(s_j).
\]

It then follows that one can write the Log-likelihood of the observed sample as
\[
\Log~\mathcal{L}(\mat{B})=\Log \prod_{i=1}^nf_{\aleavect{x}}(\vect{x}_i)=n\Log|\mat{B}|+\sum_{i=1}^n\sum_{j=1}^m \Log~f_{\aleasca{s}_j}(\vect{b}_j\transp\vect{x}_i)
\]
where $\vect{b}_j$ denotes the $j^{th}$ column of $\mat{B}$. This is easy to prove when one notices that $\aleasca{s}_j=\vect{b}_j\transp\aleavect{x}$.\\
Now, it remains to compute $\hat{\mat{B}}=\Argmax~\Log~\mathcal{L}(\mat{B})$ using some optimization algorithm. To perform this operation, prior densities for the sources, or a simple parametrization of the sources can be considered (see details in \cite[p.205-6]{Hyvarinen01}). Alternatives approaches, not necessarily  based on the likelihood function, are available to estimate $\mat{B}$ (and thus $\aleavect{S}$). For example, there is a relation between independence and non gaussianity (\cite{Cardoso03}). In our package, we used the FastICA algorithm which consists in finding the sources that are maximally non Gaussian, where non gaussianity is measured using the kurtosis, see \cite{Hyvarinen01}.\\

To apply standard ICA techniques on fMRI data sets, the first step is to obtain a 2D data matrix $\mat{X}$ from the 4D data array resulting from an fMRI experiment (the 4D array is the concatenation in time of several 3D functional volumes). 
This can be performed in two (dual) ways:
\begin{itemize}
 \item[(a)] one may consider that the data consist in the realization of $t_l$ random variables, each one measured (sampled) on $v_l$ voxels. This results in $t_l$ 3D spatial maps of activation. Each 3D map is then unrolled (in an arbitrary order) to get a matrix $\mat{X}$ of size $v_l\times t_l$. The mixing matrix $\mat{A}$ is in this case of size $t_l\times t_l$. 
 \item[(b)] one may consider that the data consist in the realization of $v_l$ random variables, each one measured at $t_l$ time points. This results in $v_l$ time courses each one of length $t_l$, collected into a matrix $\mat{X}$ of size $t_l\times v_l$ (here again, the order of the $v_l$ time courses in the resulting matrix is arbitrary). The mixing matrix $\mat{A}$ is in this case of size $v_l\times v_l$.
\end{itemize}
Using these data, the empirical counterpart of the noise-free model (\ref{model1}) can then be written as
\begin{equation}\label{model2}
 \mat{X}\transp=\mat{A}\mat{S}\transp
\end{equation}
where $\mat{X}=\left[\begin{array}{c}
\vect{x}_1\\
\vdots\\
\vect{x}_n
\end{array}\right]$ and $\mat{S}=\left[\begin{array}{c}
\vect{s}_1\\
\vdots\\
\vect{s}_n
\end{array}\right]$.\\

Case (a) corresponds to spatial ICA (sICA) and the rows of matrix $\mat{S}\transp$ contain spatially independent source signals of length $n=v_l$ (unrolled source spatial maps). Case (b) corresponds to temporal ICA (tICA) and the rows of matrix $\mat{S}\transp$ contain here temporally independent source signals of length $n=t_l$ (source time courses). Note that the row-dimension of  matrices $\mat{X}$ and $\mat{S}$ above corresponds to sample size, which is the classical statistical community's convention (but not  the neuroimaging community one where matrices should be transposed).\\

At this point, one may have noticed that, because the mixing matrix $\mat{A}$ is square in standard ICA \cite[p.267]{Hyvarinen01}, in writing (\ref{model2}) we have implicitly supposed that the number of sources $m$ is equal to $t_l$ in case (a) and $v_l$ in case (b). This is not necessarily the case. Data pre-processing based on PCA is generally used to overcome this problem. Doing this, model (\ref{model2}) should be re-written as
\begin{equation}\label{model3}
\boldsymbol{\Lambda}_{red}^{-1/2}\mat{E}\transp_{red}\dot{\mat{X}}\transp=\mat{A}\mat{S}\transp
\end{equation}
where $\boldsymbol{\Lambda}_{red}$ (resp. $\mat{E}_{red}$) is the (\textit{reduced}) matrix whose diagonal elements (resp. columns) consist of the $m$ largest (non null) eigenvalues (resp. eigenvectors) of the empirical covariance (or eventually correlation) matrix $\dot{\mat{X}}\transp\dot{\mat{X}}/n$  (note that the mixing matrix into $\dot{\mat{X}}\transp$ is then given by $\mat{a}^X=\mat{E}_{red}\boldsymbol{\Lambda}^{1/2}_{red} \mat{A}$). The size of the matrix $\mat{E}_{red}$ is respectively $t_l\times m$ in case (a) and $v_l\times m$ in case (b). Note also that the Singular Value Decomposition (SVD) of matrix $\dot{\mat{x}}$ can be used:
\[
 \dot{\mat{x}}=\mat{u}\mat{d}\mat{v}\transp
\]
and then replace, in equation (\ref{model3}), $\boldsymbol{\Lambda}_{red}$ with $\mat{d}^2_{red}/n$ where $\mat{d}_{red}$ is the diagonal matrix consisting of the $m$ largest singular values of $\mat{d}$, and $\mat{E}_{red}$ with $\mat{v}_{red}$ refering to the associated singular vectors. Equation (\ref{model3}) then leads to the following decomposition:
\begin{equation}\label{decompX}
 \dot{\mat{x}}\transp=\frac{1}{\sqrt{n}}\mat{v}_{red}\mat{d}_{red}\mat{a}\mat{s}\transp=\sum_{j=1}^m\mat{a}^X_{\bullet j}\otimes\mat{s}_{\bullet j},
\end{equation}
where $\mat{s}_{\bullet j}$ denotes the $j^{th}$ column of $\mat{S}$. Note that the pair $(\mat{a}^X_{\bullet j},\mat{s}_{\bullet j})$ is sometimes (abusively) called the $j^{th}$ independent (estimated) component, although this term should be used solely for $\mat{s}_{\bullet j}$, whereas $\mat{a}^X_{\bullet j}$ refers to the weighting coefficients (degree of expression) of the $j^{th}$ spatial component over time (for sICA) or of the $j^{th}$ temporal source over space, i.e. over the voxels (for tICA).\\

Figure (\ref{decompsICA}) below is an illustration of equation (\ref{decompX}) for sICA.

\begin{figure}[H]
\centering
 \includegraphics{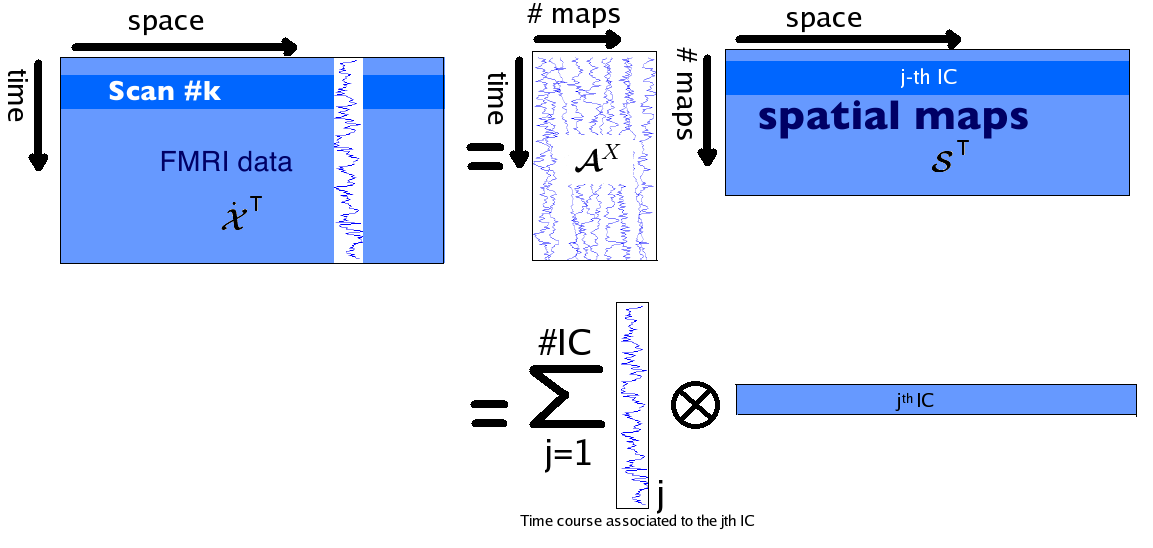}
\caption{Illustration of the sICA decomposition after a PCA pre-processing step.}\label{decompsICA}
\end{figure}

Now, due to the large number of voxels in fMRI experiments, it is not computationally tractable to fully diagonalize the correlation matrix in the temporal case (which is in this case of size $v_l\times v_l$). So tICA, as far as we know, has never been applied on the entire brain volume but only on a small portion of it (\cite{Calhoun01b}; \cite{Seifritz02};  \cite{Hu05}). \\

Our extension to the  \R package \pkg{AnalyzeFMRI} for TS-ICA uses a nice property of the SVD decomposition that allows to obtain the non-zero eigenvalues (and their associated eigenvectors) of the correlation matrix in the temporal case. It then becomes feasible to perform tICA for fMRI data on the whole brain volume. We now briefly present this result.

\begin{thm}
The largest eigenvalues of the (huge) covariance matrix in the temporal case, as well as their associated eigenvectors, can be obtained from the same quantities  computed from the (small) covariance matrix in the spatial case.
\end{thm}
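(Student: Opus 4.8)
The plan is to reduce the statement to the classical linear-algebra fact that, for any rectangular matrix $M$, the two Gram matrices $M\transp M$ and $MM\transp$ share the same nonzero eigenvalues and have eigenvectors linked by a single linear map. First I would make the transpose relationship between the two cases explicit: writing $\dot{\mat{x}}$ for the centered $v_l\times t_l$ data matrix of the spatial case (a), the temporal case (b) uses its transpose $\dot{\mat{x}}\transp$ of size $t_l\times v_l$, so that the small spatial covariance is $\dot{\mat{x}}\transp\dot{\mat{x}}/v_l$ (of size $t_l\times t_l$) while the huge temporal covariance is $\dot{\mat{x}}\dot{\mat{x}}\transp/t_l$ (of size $v_l\times v_l$). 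Up to the scalar factors $1/v_l$ and $1/t_l$ these are exactly $M\transp M$ and $MM\transp$ with $M=\dot{\mat{x}}$.

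Next I would prove the correspondence. The cleanest route is through the SVD already introduced above, $\dot{\mat{x}}=\mat{u}\mat{d}\mat{v}\transp$, which yields $\dot{\mat{x}}\transp\dot{\mat{x}}=\mat{v}\mat{d}^2\mat{v}\transp$ and $\dot{\mat{x}}\dot{\mat{x}}\transp=\mat{u}\mat{d}^2\mat{u}\transp$ simultaneously; both decompositions carry the same diagonal $\mat{d}^2$, so their nonzero eigenvalues coincide, and the eigenvectors are the columns of $\mat{v}$ and of $\mat{u}$ respectively. Equivalently, and this is the computationally useful form, I would verify directly that if $\vect{v}$ is a unit eigenvector of the small matrix with eigenvalue $\lambda\neq 0$, then $MM\transp(M\vect{v})=M(M\transp M\vect{v})=\lambda(M\vect{v})$, so $M\vect{v}$ is an eigenvector of the huge matrix for the same $\lambda$; since $\|M\vect{v}\|^2=\vect{v}\transp M\transp M\vect{v}=\lambda$, the normalized huge eigenvector is $\vect{u}=M\vect{v}/\sqrt{\lambda}$. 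The inverse map $\vect{u}\mapsto M\transp\vect{u}/\sqrt{\lambda}$ shows this is a bijection between the corresponding eigenspaces and hence preserves multiplicities. Thus every large eigenvalue/eigenvector pair of the $v_l\times v_l$ matrix is recovered by diagonalizing only the $t_l\times t_l$ matrix and then applying $\dot{\mat{x}}$, without ever forming the huge matrix.

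I would then reconcile the two scalar normalizations, noting that the eigenvalues of the spatial covariance are $d_k^2/v_l$ whereas those of the temporal covariance are $d_k^2/t_l$, so they agree after multiplication by the known ratio $v_l/t_l$; this is a harmless rescaling and leaves the eigenvectors unchanged, since scaling a matrix by a positive constant does not alter its eigenvectors. I would also record that at most $\min(t_l,v_l)=t_l$ eigenvalues are nonzero, so the huge matrix merely carries $v_l-t_l$ extra null eigenvalues, which are discarded in the PCA reduction of (\ref{model3}) in any case.

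The part that needs genuine care, and which I expect to be the main obstacle, is the centering convention rather than the algebra itself. Forming the spatial covariance centers the columns of the $v_l\times t_l$ matrix (one spatial mean removed per time point), whereas forming the temporal covariance centers the columns of the $t_l\times v_l$ matrix (one temporal mean removed per voxel); these are centerings along different axes, so a single $\dot{\mat{x}}$ cannot literally feed both unless one commits to a fixed convention. I would therefore phrase the theorem in terms of one common centered matrix $\dot{\mat{x}}$, so that the two empirical second-moment matrices are exact transposes of one another, and then check that the preprocessing actually used in the package is consistent with this choice, guaranteeing that the eigenvector transfer $\vect{u}=\dot{\mat{x}}\vect{v}/\sqrt{\lambda}$ is exact rather than merely approximate.
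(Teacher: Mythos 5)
Your proposal is correct and takes essentially the same route as the paper: the paper likewise invokes the SVD/Gram-matrix duality, obtains the eigenvector transfer by pre-multiplying the small matrix's eigen-equation $\dot{\mat{x}}\dot{\mat{x}}\transp\vect{g}_k=d_k^2\vect{g}_k$ by $\dot{\mat{x}}\transp$ and normalizing by $1/d_k$ (exactly your map $\vect{v}\mapsto M\vect{v}/\sqrt{\lambda}$ since $\lambda=d_k^2$), and notes that the remaining $v_l-t_l$ eigenvalues are zero and carry no information. Your centering caveat is a sensible refinement the paper does not discuss explicitly, but since the paper's proof works throughout with a single centered matrix $\dot{\mat{x}}$ (so the two Gram matrices are exact transposed pairs by construction), it does not alter the substance of the argument.
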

\begin{proof}
We consider the temporal case, where the size of the matrix $\dot{\mat{x}}$ is $t_l\times v_l$. Let's note $\mat{S}_X=\dot{\mat{x}}\transp\dot{\mat{x}}/t_l$ the (empirical) covariance matrix of $\dot{\mat{x}}$, which (large) size $v_l\times v_l$. We want to find the $r$ nonzero largest eigenvalues of $\mat{S}_X$ and their associated eigenvectors $\vect{f}_k$, $k=1,\ldots,r$. SVD theory allows to write
\begin{equation}\label{equ1}
 \dot{\mat{x}}\transp\dot{\mat{x}}\vect{f}_k=d_k^2\vect{f}_k \qquad k=1,2,\ldots,r;
\end{equation}
\begin{equation}\label{equ2}
 \dot{\mat{x}}\dot{\mat{x}}\transp\vect{g}_k=d_k^2\vect{g}_k \qquad k=1,2,\ldots,r.
\end{equation}
Pre-multiplying equation (\ref{equ2}) by $\dot{\mat{x}}\transp$, one can see that $\dot{\mat{x}}\transp\vect{g}_k$ is an eigenvector of $\dot{\mat{x}}\transp\dot{\mat{x}}$ associated with the eigenvalue $d_k^2$. Thus, $\vect{f}_k$ is proportional to $\dot{\mat{x}}\transp\vect{g}_k$.\\ %Similarly, $\vect{F}_k$ is proportionnal to $\dot{\mat{x}}\transp\vect{G}_k$.\\

The idea is thus to compute the $t_l$ eigenvalues $\{d_1^2,\ldots,d_{t_l}^2\}$ and the $t_l$ eigenvectors $\vect{g}_k$ of the (small) matrix  $\dot{\mat{x}}\dot{\mat{x}}\transp$ of size $t_l\times t_l$. From this point, we get the $t_l$ first eigenvectors $\vect{f}_k$ (among the $v_l$ ones) of $\mat{s}_X$ using this formula:
\begin{equation}
 \vect{f}_k=\frac{1}{d_k}\dot{\mat{x}}\transp\vect{g}_k.
\end{equation}
% which can be put upon matrix form as
% \begin{equation}
%  \mat{f}=\dot{\mat{x}}\transp\mat{g}\mat{d}^{-1}.
% \end{equation}

The $v_l$ eigenvalues of $\mat{s}_X$ are given by $\frac{1}{t_l} d_1^2,\ldots,\frac{1}{t_l} d_{t_l}^2,0,\ldots,0$. Note that the last $v_l-t_l$ eigenvectors of $\mat{s}_X$ cannot be obtained using this approach, but anyway, as $d_i^2=0$ ($i>t_l$) they  do not contain any useful information.\\

\end{proof}

\section{The AnalyzeFMRI package}
\pkg{AnalyzeFMRI} is a package for the exploration and analysis of large 3D MR structural data sets and 3D or 4D MR functional data sets. From reconstructed MR volumes, this package allows the user to examine data quality and analyze time series. To efficiently explore fMRI data sets using tICA and sICA we added several interesting extensions to the initial package (e.g. tICA, automatic choice of the number of components to extract or GUI visualization tool). Some of them  are briefly described below  (see \url{http://user2010.org//tutorials/Whitcher.html} for more details, and also \cite{Marchini02} for a description of initial functions). Table \ref{table1} describes seven important functions available in the package.\\

\textit{Importing data:}\\
The package now provides read and write capabilities for the new NIFTI  (\texttt{nii} or \texttt{hdr/img} files) format. This format contains a header gathering all the volume information (image dimension, voxel dimension, data type, orientation, quaternions, ..., up to more than 40 parameters) and a data part that contains values corresponding to the MR signal intensity measured at each voxel of the image object. \\

\textit{Data pre-processing:}\\
Briefly, before doing any statistical analysis, functional MR data should be corrected from geometric distortions, realigned and smoothed. Only the latter step is embedded into the current (and initial) version of the package.\\

\textit{Image operators:}\\
Several operators can be applied on the images such as rotation, translation, scaling, shearing or cropping. These operations can be performed by changing quaternion parameter in the NIFTI header or by direct modification of the matrix values. The matrix indices (voxel position) can be translated to volume coordinates (in mm) to facilitate comparison between subjects.\\

\textit{Data analysis using TS-ICA:}\\
In the initial version of the package, it was only possible to analyze fMRI data using spatial ICA. We added temporal ICA and the automatic detection of the number of components to extract. Automatic detection is based of the computation of the eigenvalues of the empirical correlation matrix of the data, keeping only those greater than 1. The automatic detection is useful when no \textit{a priori}  knowledge is available. Note also that the user can now insert \textit{a priori} knowledge in selecting only a specific region of the brain to explore (\textit{via} a mask image) or in searching for components correlated with a specific time course signal.\\

\textit{Visualization:}\\
Anatomical or functional volumes and statistical (parametric or not) maps can be displayed in two separate windows with linked cursors to localize a specific position  (see Figure \ref{visu}).  Our visualization tool can be used in two ways. First, you can use it to visualize the results of a temporal or spatial ICA (as displayed in Figure \ref{visu} for sICA). The time slider here indicates the rank of the component currently visualized (among all those extracted) and the displayed time course represents the values of the spatial component for the selected voxel (blue circle). Second, you can use it to visualize raw fMRI data. In this case the time slider would represent the time course of the selected voxel, i.e. the MR signal values across time measured at the voxel position.

\begin{figure}[h!]
\centering
\includegraphics[width=\textwidth ]{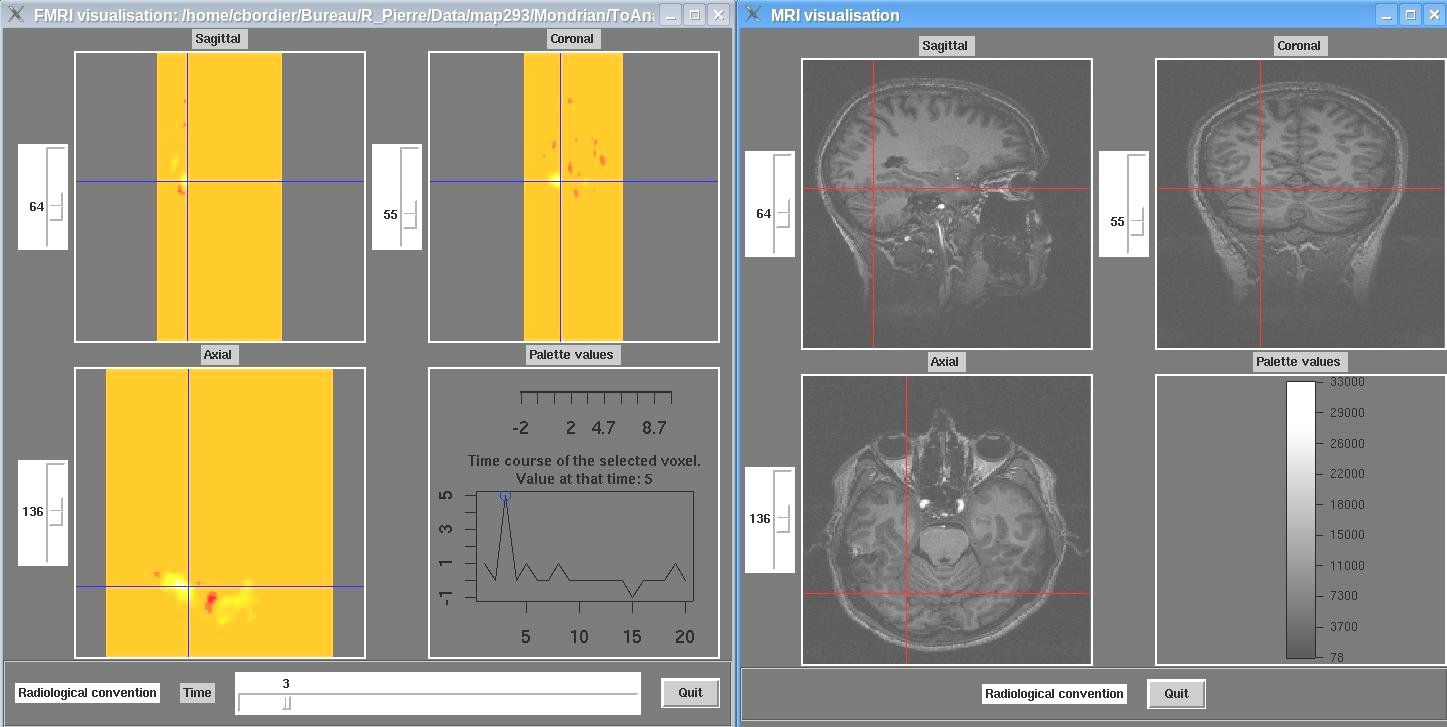}
\includegraphics[width=\textwidth,height=2.5cm ]{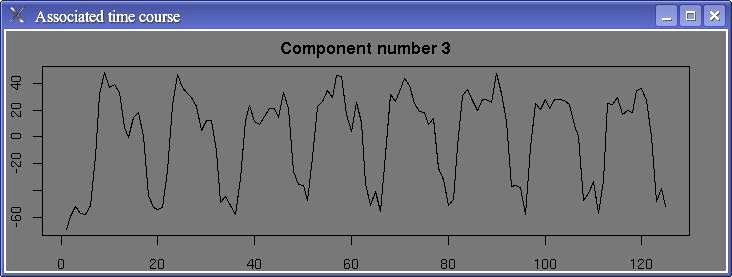}
\caption{Image Display. Right top: Anatomical image (clockwise: sagittal, coronal and axial views). Left top:  statistical map of activations obtained after spatial IC analysis of the functional data sets in the sagittal, coronal and axial orientation.  The value of the selected extracted spatial component (here rank=3) for the selected voxel (blue cross) is indicated in the right bottom quadrant (blue circle). The localization of the selected voxel is reported on the anatomical image (red cross). Bottom: Time course of the weighting coefficients of the third component (identical for all the voxels of this component).}
\label{visu}
\end{figure}

\begin{center}
\begin{table}
\centering
\begin{tabular}{| l |p{8cm}|}
\hline
\multicolumn{1}{|c|}{\R function} & \multicolumn{1}{c|}{Description}\\
\hline
\code{f.analyzeFMRI.gui()} & Starts an \R/TclTk based GUI to explore, using the \pkg{AnalyzeFMRI} package functions, an fMRI data set stored in ANALYZE format. \tabularnewline
\hline
\code{f.icast.fmri.gui()} & The GUI provides a quick and easy to use interface for applying spatial or temporal ICA to fMRI data sets in NIFTI  format.\tabularnewline
\hline
\code{f.plot.volume.gui()} & TclTk GUI to display functional or structural MR images.  This GUI is useful for instance to display the results performed with \code{f.icast.fmri.gui()}.\tabularnewline   
\hline
\code{f.read.header(file)} & Reads ANALYZE or NIFTI (\texttt{.hdr} or \texttt{.nii}) header file. The format type is automatically detected by first reading the  magic field. \tabularnewline
\hline
\code{f.read.volume(file)} & Reads ANALYZE or NIFTI image file and puts it into an array. Automatic detection of the format type. \tabularnewline
\hline
 \code{f.write.analyze(mat,file,...,)} & Stores the data in ANALYZE format: creation of the corresponding  \texttt{.img/.hdr} pair of files.
 \tabularnewline   
\hline
 \code{f.write.nifti(mat,file,size,...)}   & Stores the data in NIFTI format: creation of the corresponding  \texttt{.img/.hdr} pair of files or single \texttt{.nii} file.
  \tabularnewline   
  \hline 
\end{tabular}
\caption{Seven main functions of our package with their description.\label{table1}}
\end{table}
\end{center}

%\begin{figure}
%\begin{center}
%\includegraphics[width=3in, height=3in]{fig1.png}
%\caption{ceci est un exemple}
%\end{center}
%\end{figure}

\section{Results}
We evaluated the TS-ICA part of the \pkg{AnalyzeFMRI} package both on simulated data and real data sets coming from human visual fMRI experiments.
\subsection{Simulated data sets }
In fMRI experiments, three standard paradigms are used. ``Block design" which alternates, in a fixed order, stimuli that last few seconds;  ``event-related design" which alternates, in a random or pseudo-random order, stimuli that last few milliseconds and ``phase-encoded paradigm" that generates traveling periodic waves of activation with different phases.  
In order to detect patterns of activation for the two former cases, we can use respectively a cross correlation with a square wave, or a binary cross correlation (to be defined later) with a sequence of 0 and $\pm1$ representing the stimulation conditions. A Fourier analysis is more suitable for the latter.

Before testing our method on real data sets, we used three simulated cases: 1) a simple case to show how works our method, and two cases simulating real conditions: 2) an event-related design simulation and 3) a phase-encoded simulation. The latter  simulates the real case described in Section 4.2 ``retinotopic mapping experiment". The square wave signal in the former simulates the ``color center experiment" reported in Section 4.2.\\

\R source code (including comments) for each one of the three aforementioned simulations is provided as supplementary material. 
Because our final results may change due to the use of random numbers (simulated data and initial conditions for ICA algorithm), 
we provided, in our \R code, the seeds we used for the random generators. This will permits the reader to obtain exactly the same results as those presented here.

\subsubsection{Various independent sources simulation}
The first simulated data set consisted in a sequence comprising 100 3D-images. Each image ($128\times128\times3$ voxels) was composed of four \textbf{partially overlapping} and concentric tubes. Each tube contained a single signal in its non overlapping part and a sum of two signals in its parts that intersect with another tube.  For single signals, we used, from the central tube to the peripheral one respectively, a sinusoid ($f=1/11~Hz$, $\phi=0$), a square wave  ($f=1/10~Hz$, $\phi=0$), a sinusoid ($f=1/16~Hz$, $\phi=0$) and a square wave ($f=1/4~Hz$, $\phi=0$). The background, which overlaps the tube at the periphery, contained, in its non overlapping part, a Gaussian noise (sd=0.2) (see Figure \ref{circle_simul1}).  Thus, four pure signals and four mixed signals were considered. To be realistic, we also added everywhere a Gaussian noise (sd=0.1). 
\begin {figure}[h!]
\centering
\begin{tabular}{cc}
\multirow{2}{*}{
\includegraphics[width=0.25\textwidth ]{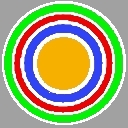}}\\%
&\includegraphics[width=0.4 \textwidth]{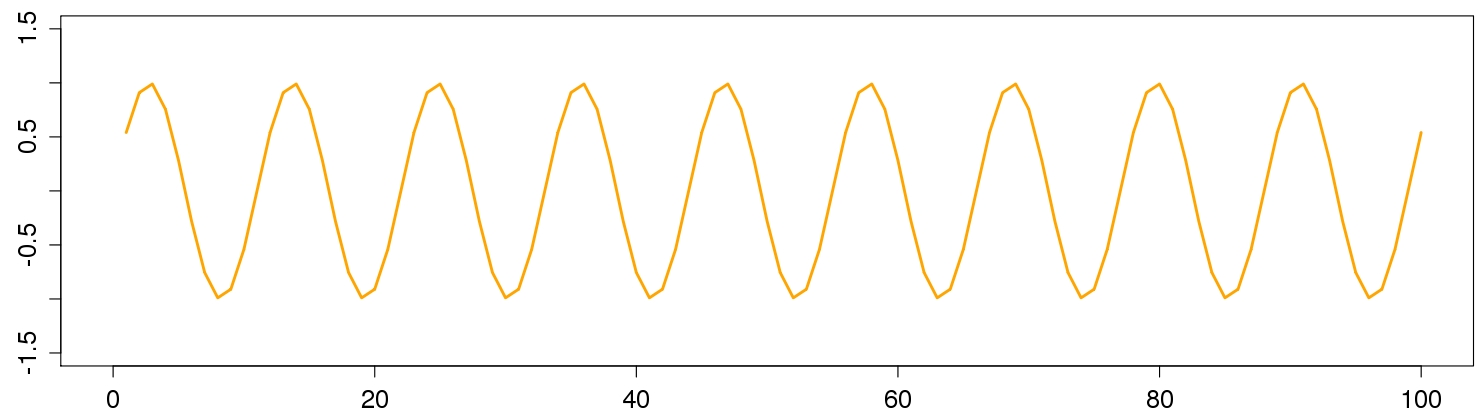}\\
& \includegraphics[width=0.4 \textwidth]{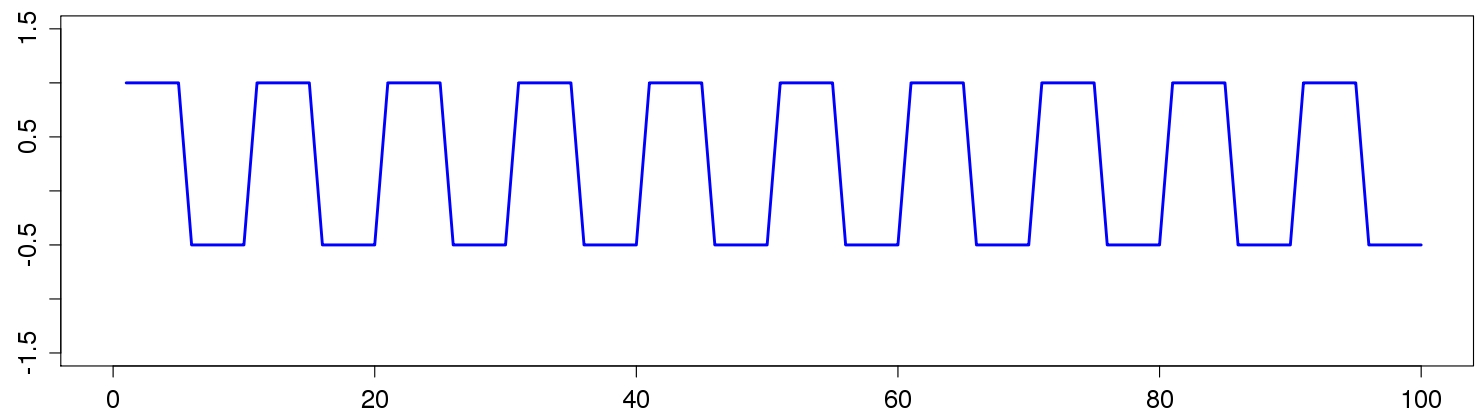}\\
 \includegraphics[width=0.4 \textwidth]{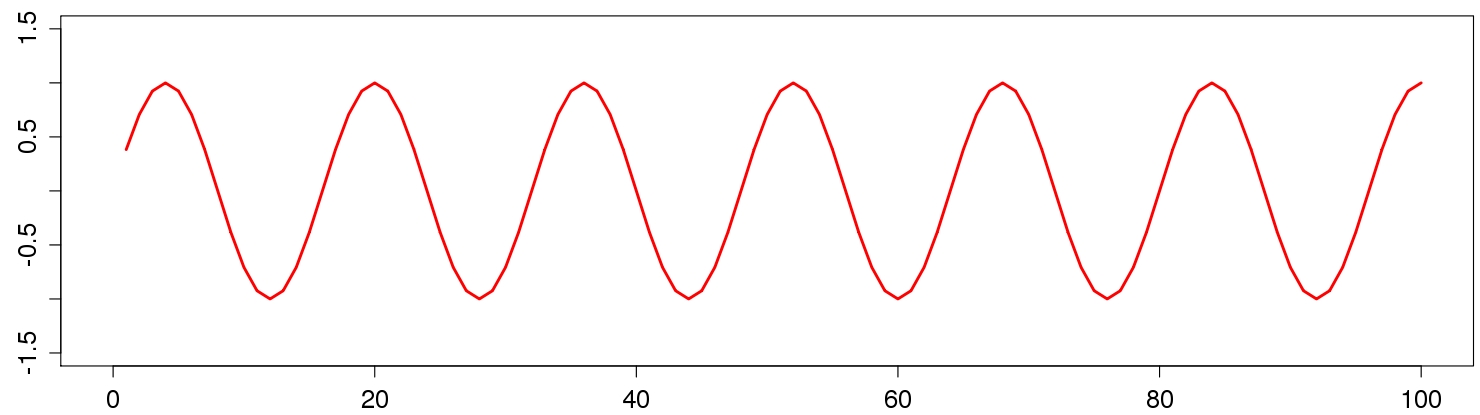}
& \includegraphics[width=0.4 \textwidth]{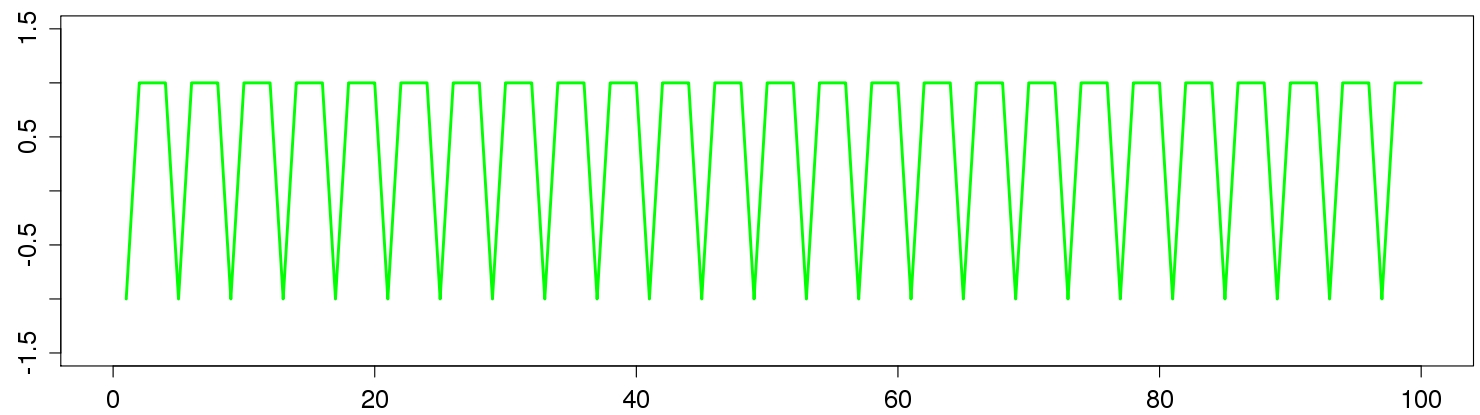}\\
\end{tabular}
\caption{First simulated data set. Upper left: A transverse slice of the volume. Each color indicates the localization of each signal. Pure signals are represented in orange (source 1), blue (source 2), red (source 3) and green (source 4), and mixed signals are present in white parts. Background (grey) contains a Gaussian noise (sd=0.2). Time courses of each pure signal are displayed with their corresponding color.}
\label{circle_simul1}
\end{figure}

We applied temporal and spatial ICA to these simulated data. Figure \ref{circle_simul1_result} shows the time course of the different extracted components and their spatial localization. It is interesting to note that temporal ICA extracted automatically four components with relevant time course and localization that appears correct using our \R function \code{f.plot.volume.gui()}. The computed frequencies of the time course of these components were respectively, when ordered from center to periphery, nearly equal to 1/11~Hz, 1/10~Hz, 1/16~Hz and 1/4~Hz with phase difference $\approx0$ (modulo $\pi$) with the corresponding original source signal. We used \R functions \code{Mod(fft(signal))} and \code{Arg(fft(signal))} to compute these quantities. Spatial ICA extracted automatically, in the non overlapping parts,  four spatial components with form and localization approximatively comparable to the initial sources. The first one (central tube) was not extracted. Note that each extracted component was associated with one of the original sources. This  association was made based on the higher absolute value of the correlation between the time course of the component and each one of the four original signals. Thresholded localization of a specific component was then computed by keeping its voxels with values higher (resp. lower) than their empirical quantile of order 0.9 (resp. 0.1) if the correlation of its  time course with the associated original signal was positive (resp. negative). The frequencies of the time course of the extacted components 1 to 4 were found to be, respectively, nearly equal to 1/4~Hz, 1/10~Hz, 1/16~Hz and 1/16~Hz, with phase difference $\approx0$ (modulo $\pi$) with the corresponding original source signal. The localizations were less accurate than the ones obtained with temporal ICA. This is not surprinsing. Indeed, a nonparametric test for the mutual independence between our source time signals was performed using the \R package \pkg{IndependenceTests} (for more information see \cite{IndependenceTests}, \cite{Beran07} or \cite{Bilodeau05}). 

\begin{verbatim}
require(IndependenceTests)
dependogram(cbind(signal1,signal2,signal3,signal4),c(1,1,1,1),N=10,B=200) 
\end{verbatim}

\begin {figure}[h!]
\centering
\includegraphics[width=6cm]{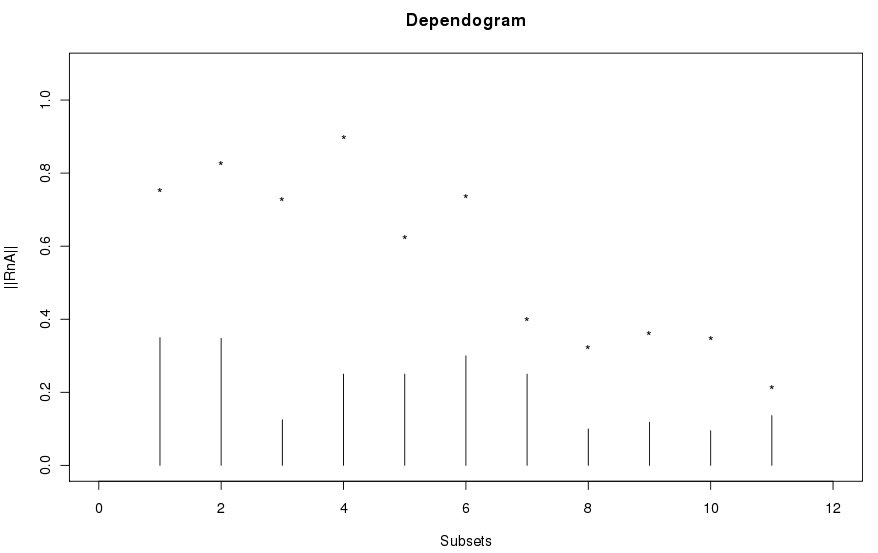}
\caption{Test of the mutual independence between our four original signals.}
\label{indep}
\end {figure}

It was not possible to detect any form of dependence among these four source signals (see Figure \ref{indep}). On the other hand, the spatial sources were not (spatially) independent because of their overlapping parts, and indeed only portions with pure signal were correctly extracted using sICA.
\begin {figure}[h!]
\centering
\begin{tabular}{m{0.4\textwidth}m{0.05\textwidth}m{0.05\textwidth}m{0.4\textwidth}}
\includegraphics[width=0.4\textwidth ]{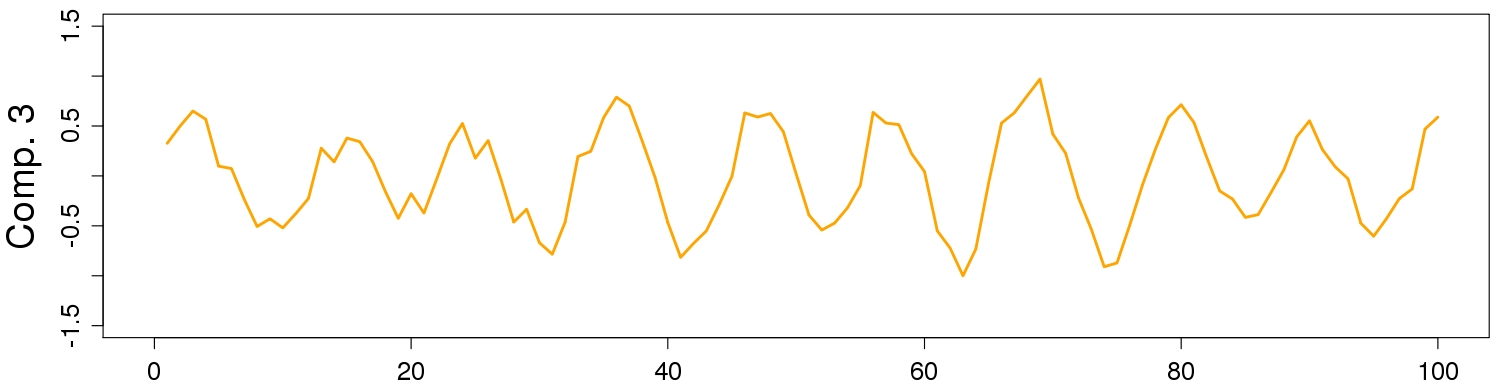}&
\includegraphics[width=0.05\textwidth ]{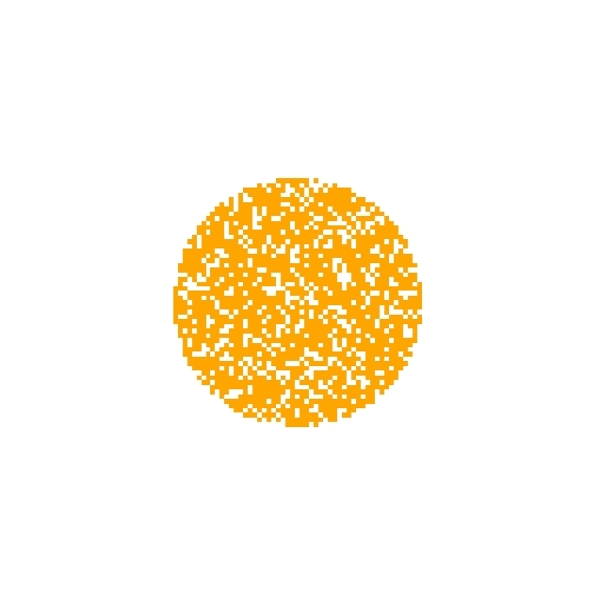}&
\includegraphics[width=0.05\textwidth ]{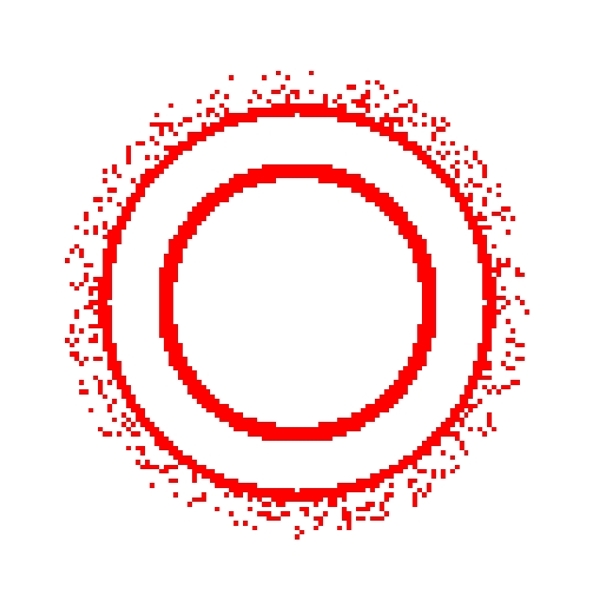}&
\includegraphics[width=0.4\textwidth ]{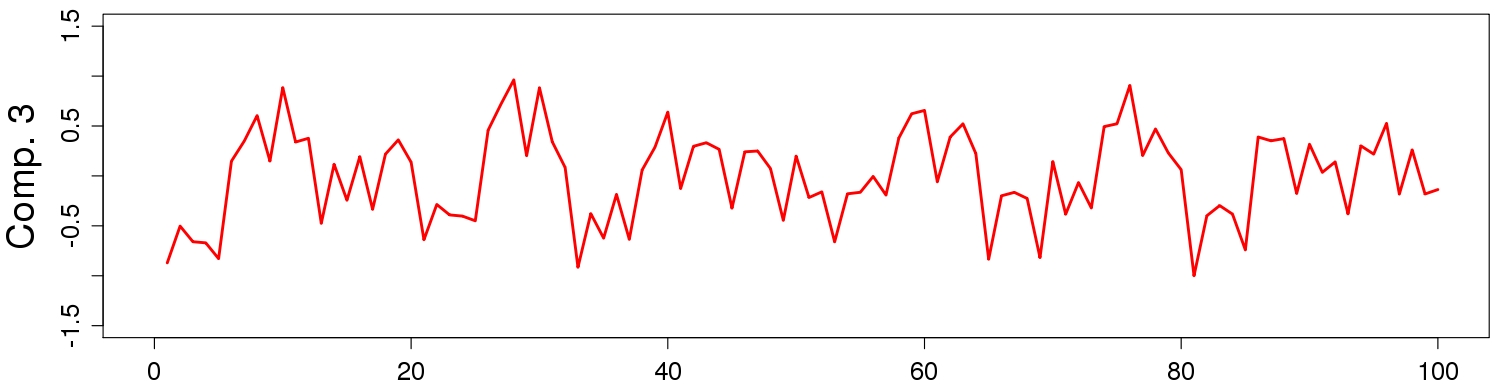}\\%
\includegraphics[width=0.4\textwidth ]{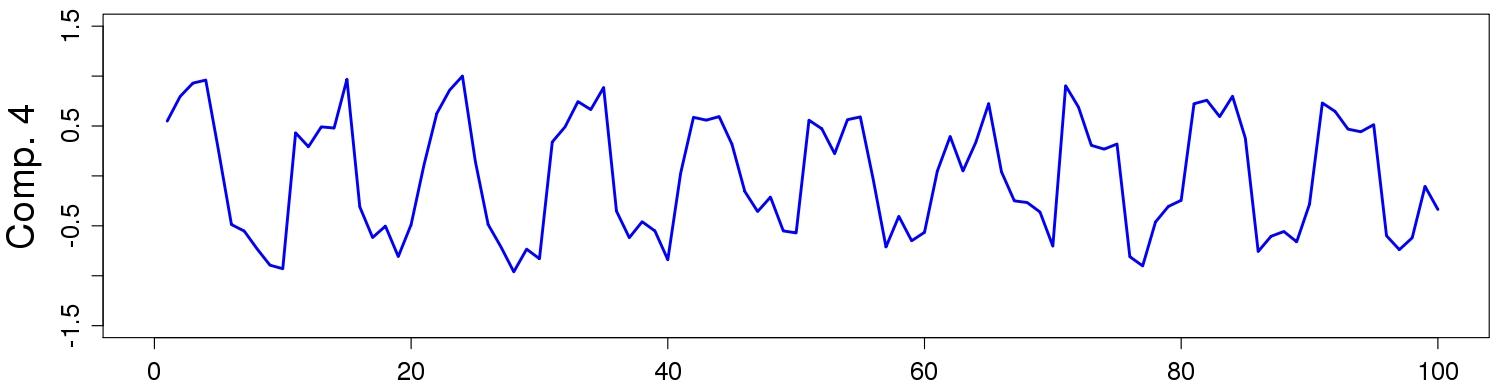}&
\includegraphics[width=0.05\textwidth ]{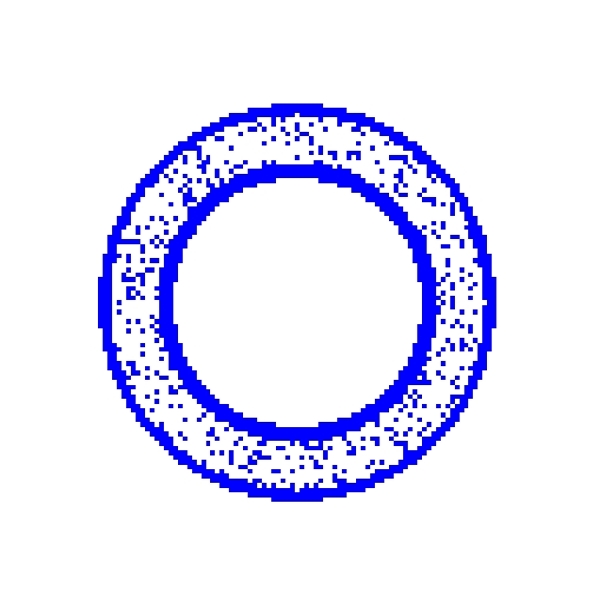}&
\includegraphics[width=0.05\textwidth ]{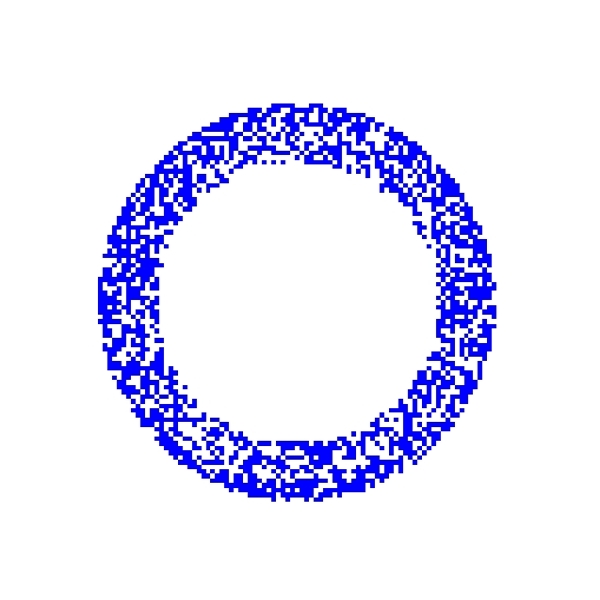}&
\includegraphics[width=0.4\textwidth ]{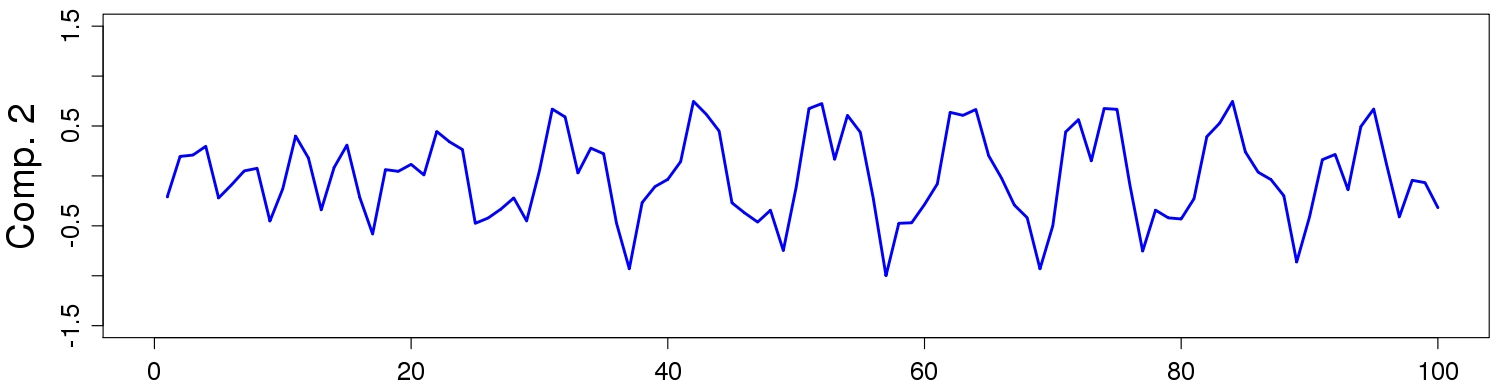}\\%
\includegraphics[width=0.4\textwidth ]{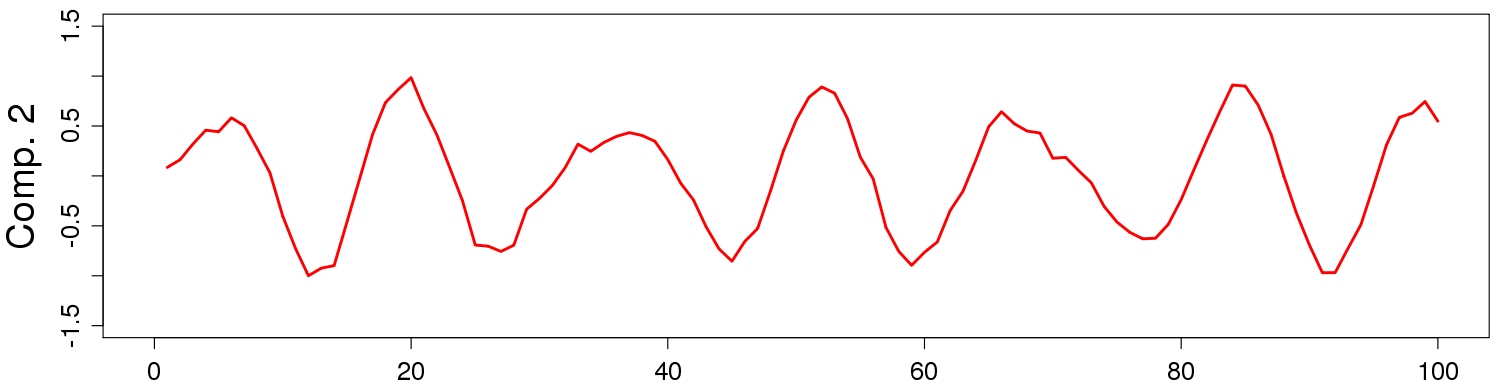}&
\includegraphics[width=0.05\textwidth ]{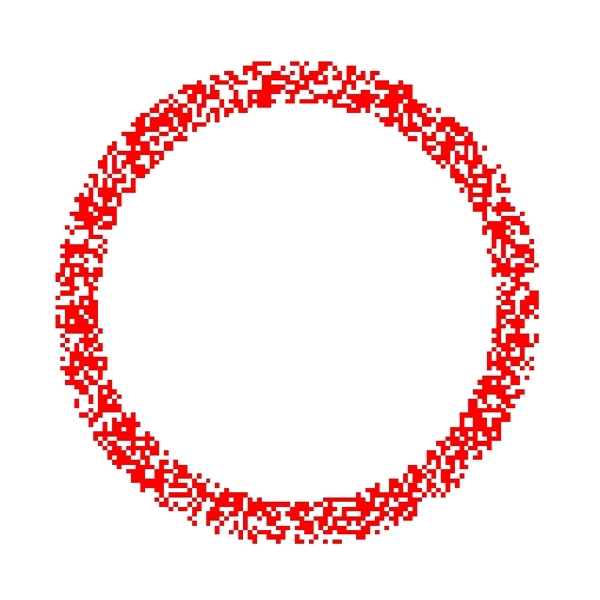}&
\includegraphics[width=0.05\textwidth ]{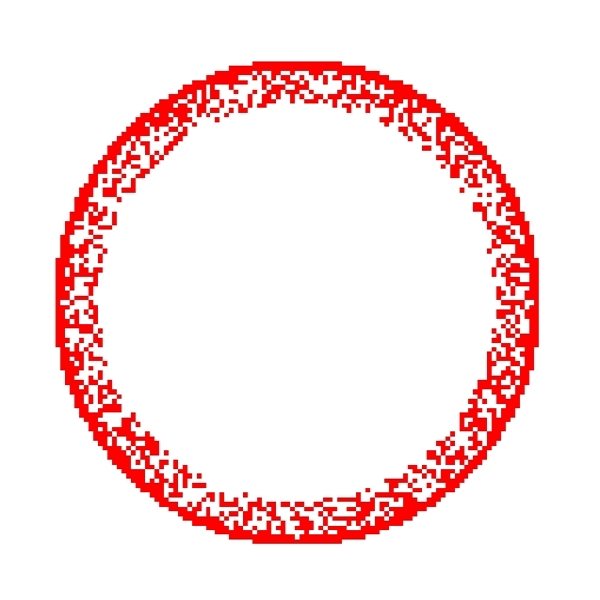}&
\includegraphics[width=0.4\textwidth ]{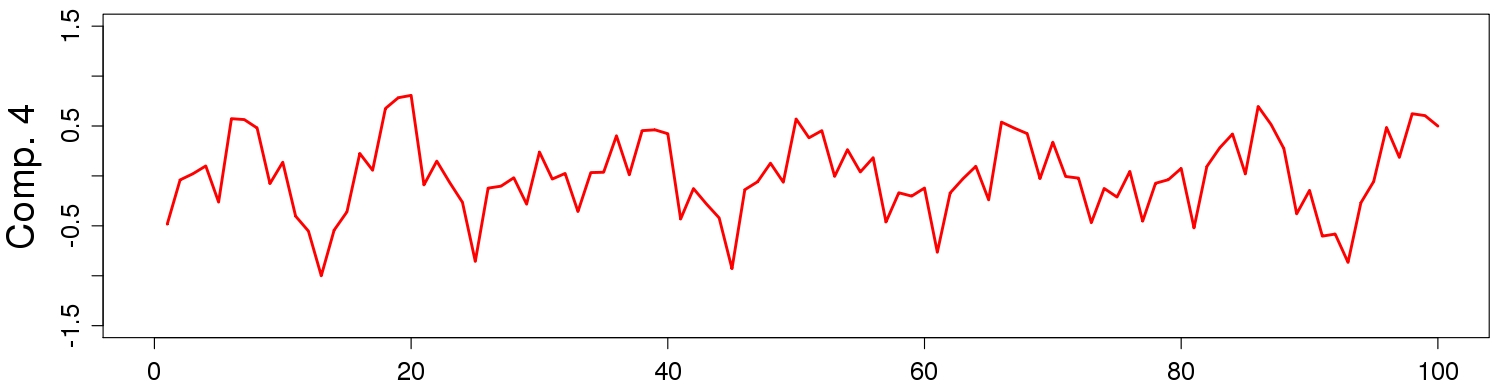}\\%
\includegraphics[width=0.4\textwidth ]{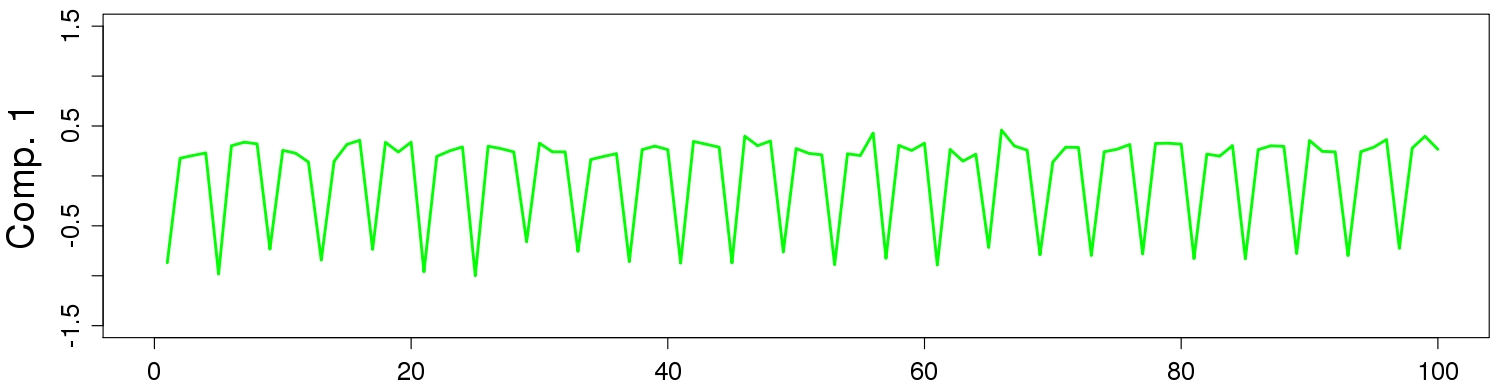}&
\includegraphics[width=0.05\textwidth ]{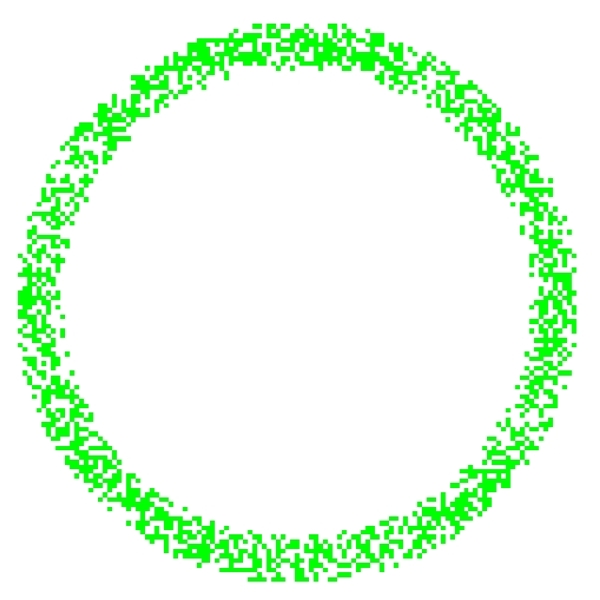}&
\includegraphics[width=0.05\textwidth ]{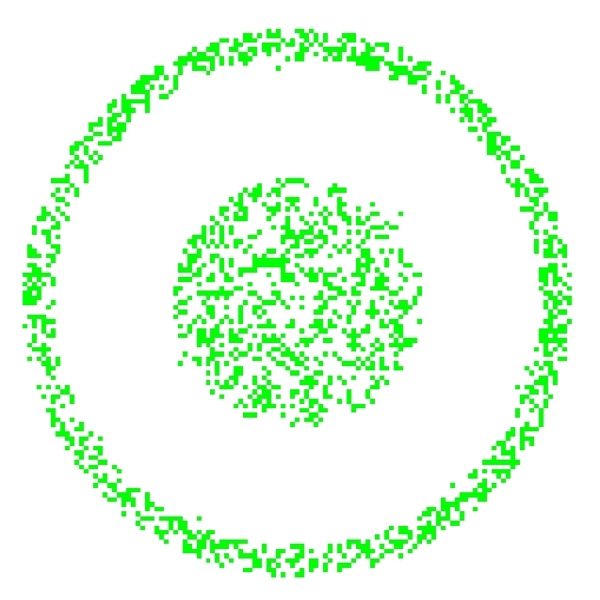}&
\includegraphics[width=0.4\textwidth ]{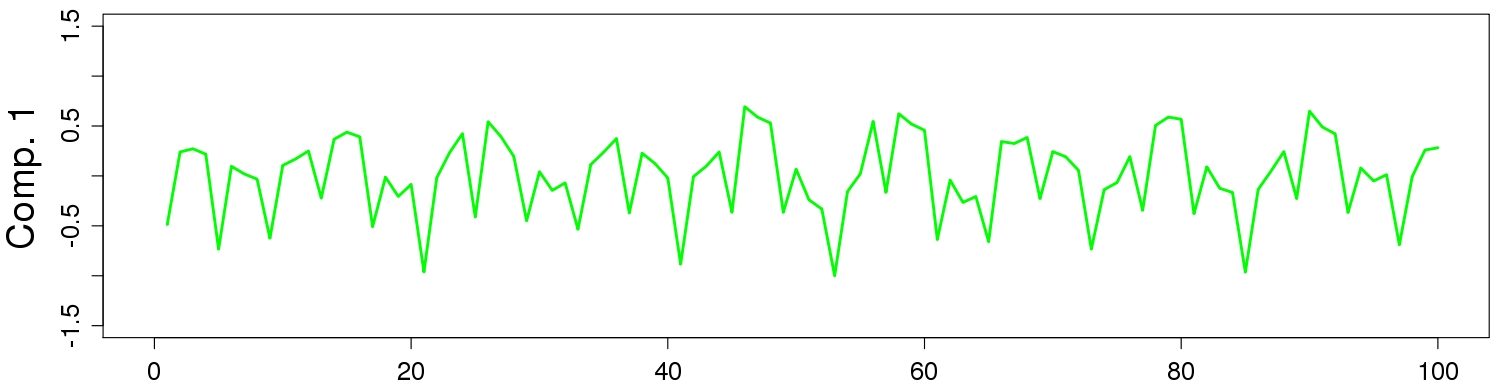}\\%
\end{tabular}
\caption{
Time course and (thresholded) localization of the extracted components obtained using temporal ICA (left) and spatial ICA (right). Each ring indicates the thresholded localization of the component coded with the same color. For temporal ICA, extracted components are similar to the simulated signals with frequency (from top to bottom) respectively of 1/11~Hz, 1/10~Hz, 1/16~Hz and 1/4~Hz and correctly localized. For spatial ICA, extracted components are noisy and found only in non overlapping regions. The frequencies of the time courses of components 1 to 4, respectively, were found to be nearly equal to 1/4~Hz, 1/10~Hz, 1/16~Hz and 1/16~Hz. See Figure \ref{circle_simul1} for the correspondence with the exact position of the simulated signals. Note that each time course was normalized.}
\label{circle_simul1_result}
\end{figure}

\subsubsection{Event-related simulation}
With event-related paradigm, neuroscientists search for voxels activated specifically by each type of stimulus. To perform a simulation in this context, we used 100 3D-images ($128\times128\times3$ voxels) composed of four \textbf{non-overlapping} and concentric tubes. Each tube contained a temporal sequence of Bernoulli random variables with various probabilities of success (see Figure \ref{circle_simul2}). The background, which surrounds the tube at the periphery, contained a Gaussian noise (sd=0.2). To be realistic, we also added everywhere a Gaussian noise (sd=0.1).\\
\begin{figure}[h!]
\centering
\begin{tabular}{cc}
\includegraphics[width=0.2\textwidth ]{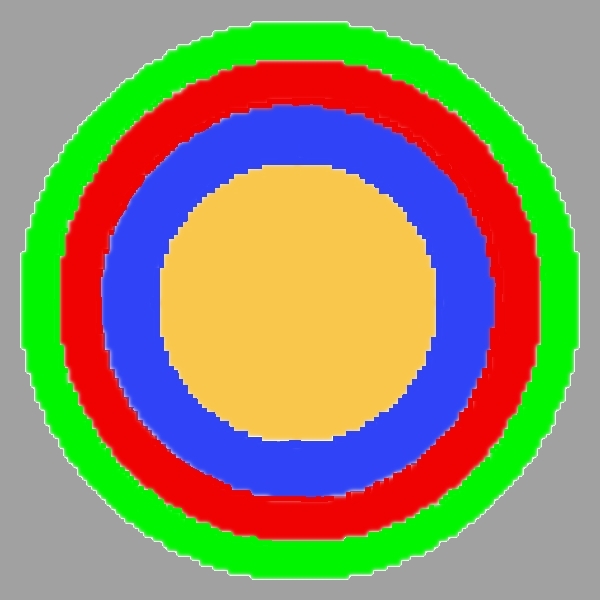}&
\includegraphics[width=0.65\textwidth ]{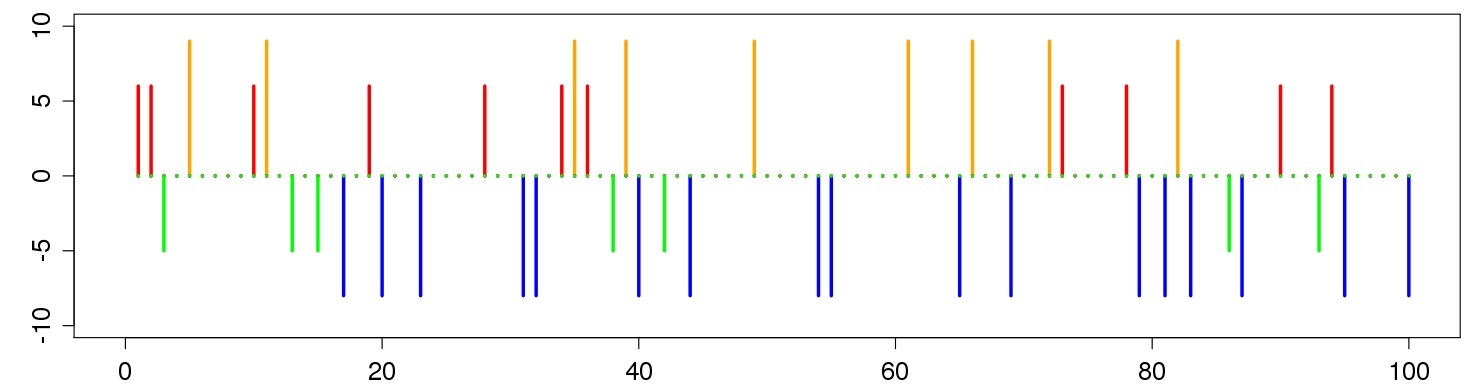}
\end{tabular}
\caption{Simulated data set. Left: A transverse slice of the volume. Each color indicates the localization of each signal. Right: Time course of each temporal sequence of Bernoulli trials displayed in their corresponding color: source 1, orange, 9 events; source 2, blue, 17 events; source 3, red, 11 events; and source 4, green, 7 events.}
\label{circle_simul2}
\end{figure}

We applied temporal and spatial ICA to these simulated data. Figure \ref{circle_simul2_result} shows the time course and  (thresholded) spatial localization of the 4 extracted components. For the latter, we used the following procedure. For each extracted time course $C_i$ ($1\leq i\leq 4$), we considered either its positive part or its negative part, selecting the one having the highest peak of amplitude (in absolute value). Let's note $\tilde{C}_i$ the selection. Then, we computed the \textit{binary correlation} (see equ. (\ref{bcor}) below) between each one of the original temporal signals of sources $S_j$ ($1\leq j\leq 4$) and a thresholded version $\tilde{C}_{i[j]}$ of $\tilde{C}_i$. The thresholds used to obtain $\tilde{C}_{i[j]}$, $1\leq j\leq 4$ were respectively 0.91, 0.83, 0.89 and 0.93 for the sources from the center to the periphery (see Figure \ref{circle_simul2}).  Intuitively, these thresholds correspond to the number of peaks of each original temporal signal among 100, i.e.  9 for source 1 (orange), 17 for source 2 (blue), 11 for source 3 (red) and 7 for source 4 (green). We define the binary correlation (number in $[-1,1]$) between two (non necessarily positive) binary random sequences $u=(u_t,1\leq t\leq T)$ and $v=(v_t,1\leq t\leq T)$ by:
\begin{equation}\label{bcor}
\textrm{bcor}(u,v)=\frac{\sum_{t=1}^T\textrm{sign}(u_t\times v_t)}{\sum_{t=1}^T\left(\textrm{sign}|u_t|+\textrm{sign}|v_t|-\textrm{sign}|u_t\times v_t|\right)}.
\end{equation}
Note that $\textrm{sign}(0)=0$.
\begin {figure}[h!]
\centering
\hspace*{-0.5cm}
\begin{tabular}{cccc}
\multirow{2}{*}{\includegraphics[width=0.35\textwidth ]{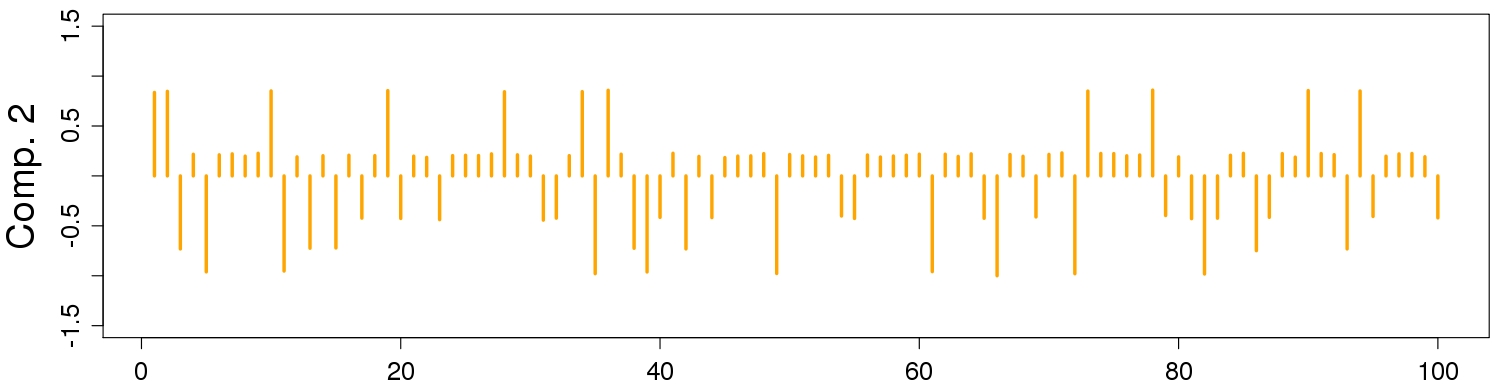}}& &
&\multirow{2}{*}{\includegraphics[width=0.35\textwidth ]{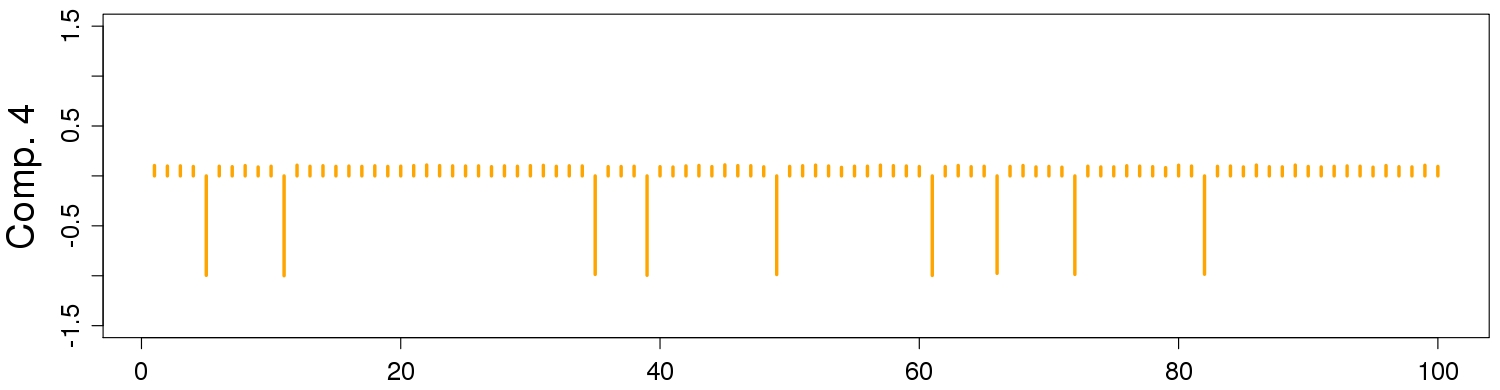}}\\%
&\includegraphics[width=0.05\textwidth ]{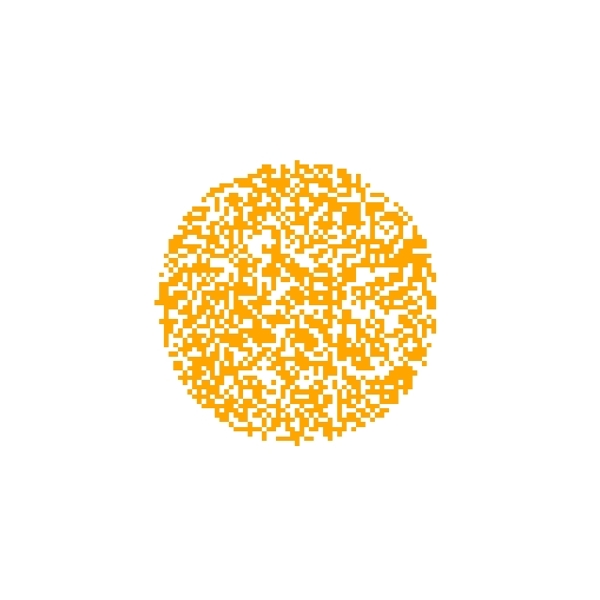}
&\includegraphics[width=0.05\textwidth ]{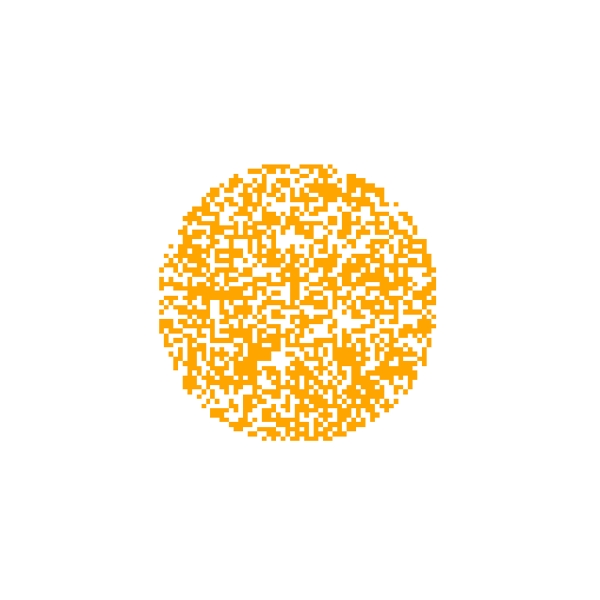}&\\
&\begin{footnotesize}$\text{bcor}_{-}=-1$ \end{footnotesize}&\begin{footnotesize} $\text{bcor}_{-}=-1$ \end{footnotesize}&\\
\multirow{2}{*}{\includegraphics[width=0.35\textwidth ]{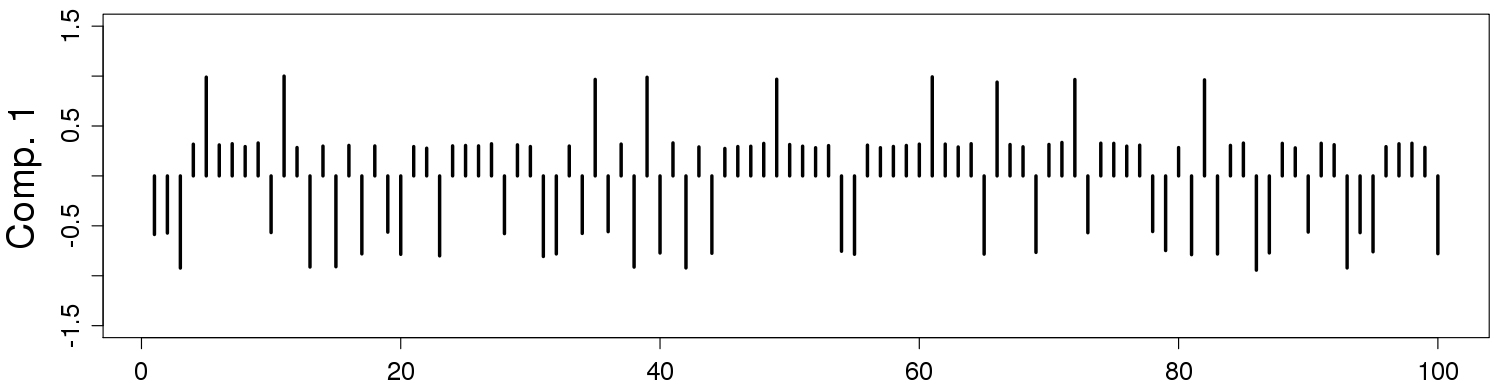}}& &
&\multirow{2}{*}{\includegraphics[width=0.35\textwidth ]{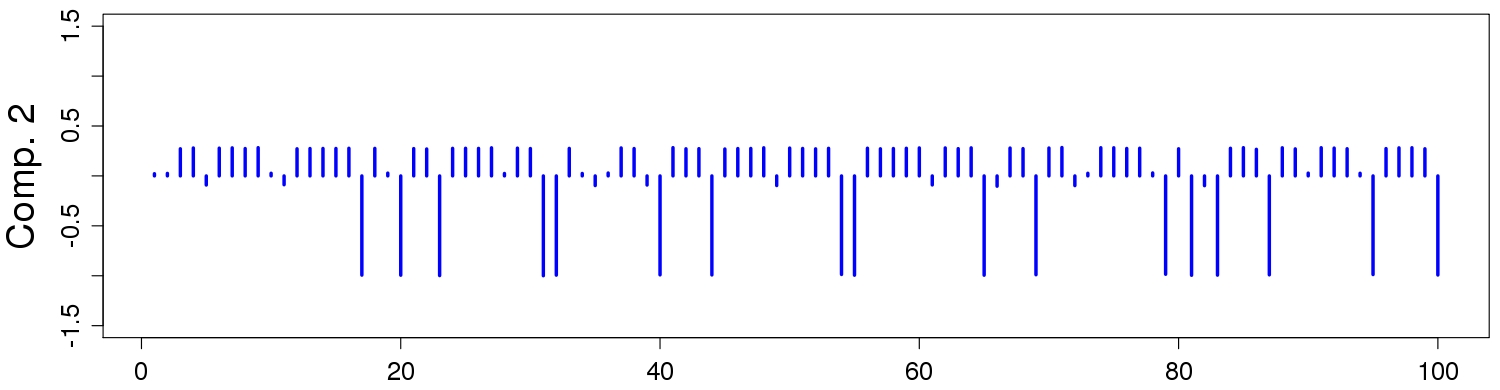}}\\%
&%\includegraphics[width=0.05\textwidth ]{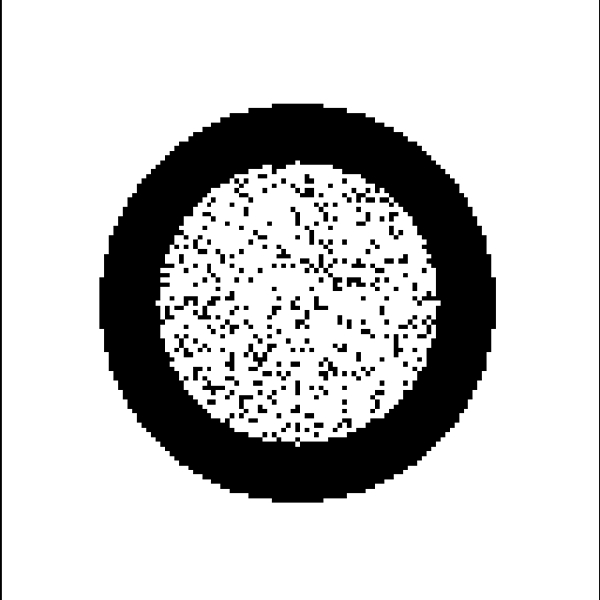}
&\includegraphics[width=0.05\textwidth ]{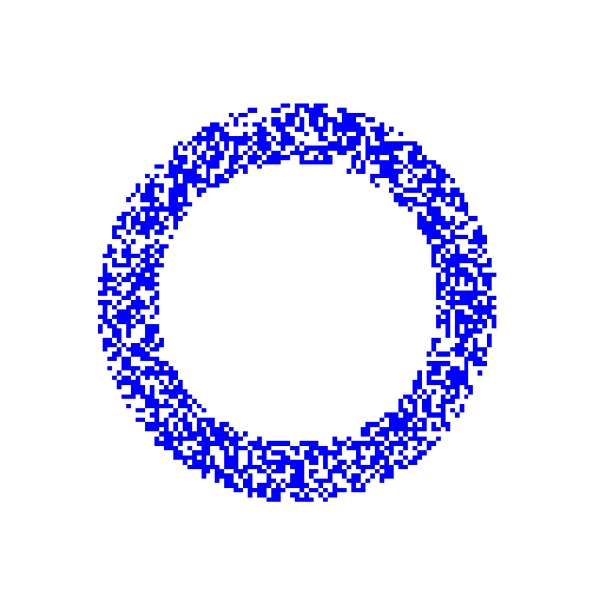}&\\
& &\begin{footnotesize} $\text{bcor}_{-}=+1$ \end{footnotesize}&\\
\multirow{2}{*}{\includegraphics[width=0.35\textwidth ]{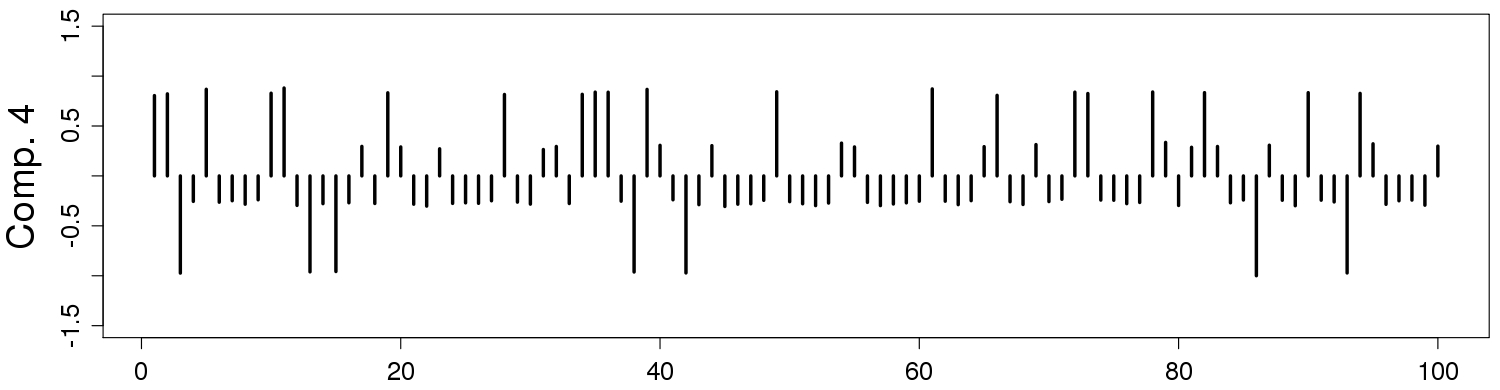}}&&
&\multirow{2}{*}{\includegraphics[width=0.35\textwidth ]{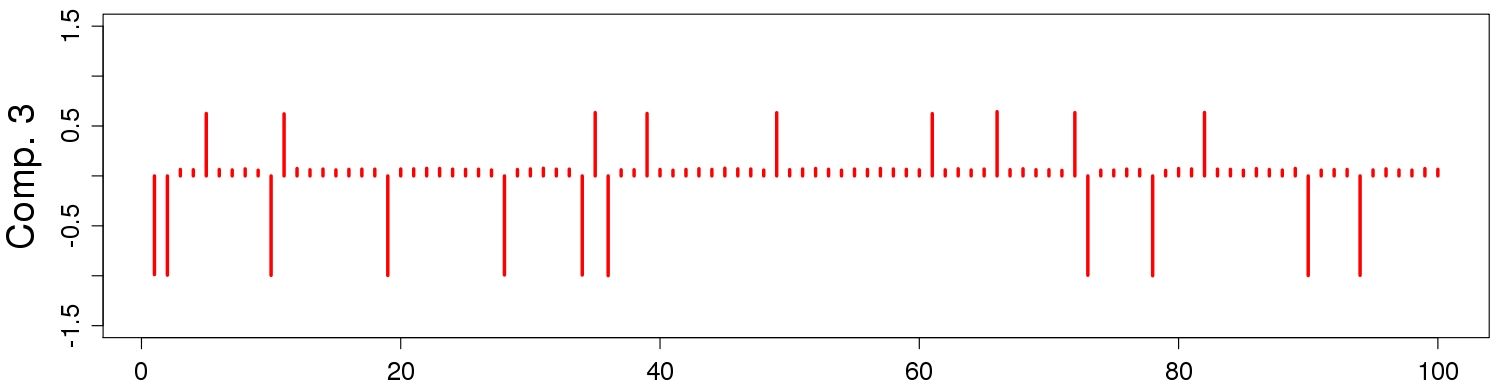}}\\%
&%\includegraphics[width=0.05\textwidth ]{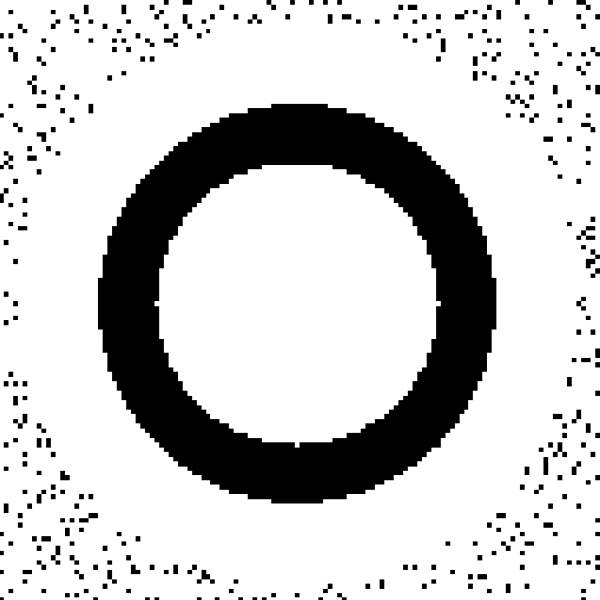}
&\includegraphics[width=0.05\textwidth ]{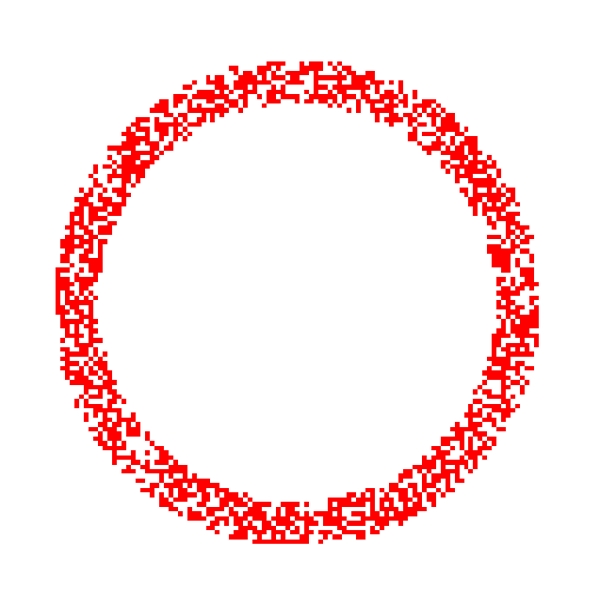}&\\
& &\begin{footnotesize} $\text{bcor}_{-}=-1$ \end{footnotesize} &\\
\multirow{2}{*}{\includegraphics[width=0.35\textwidth ]{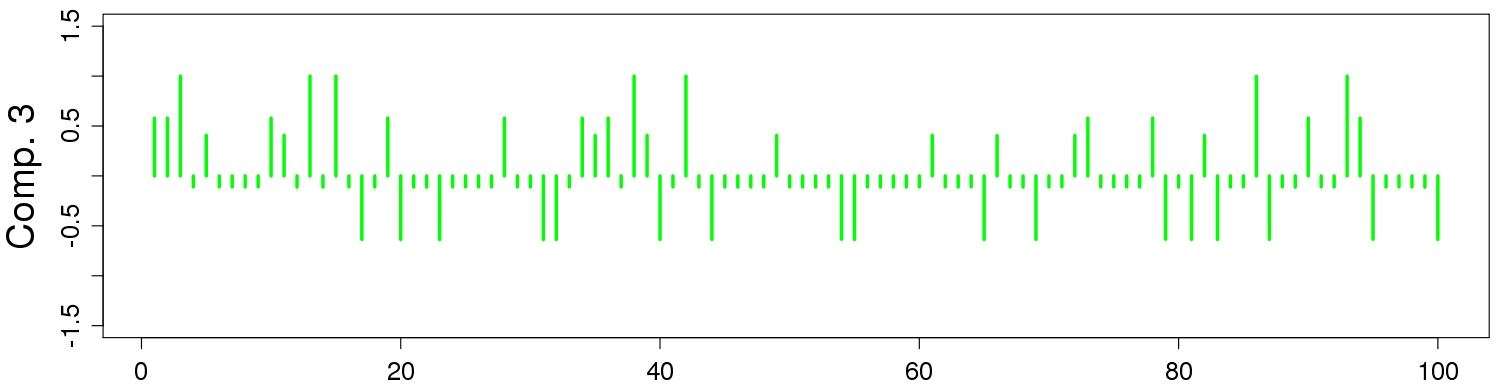}}&&
&\multirow{2}{*}{\includegraphics[width=0.35\textwidth ]{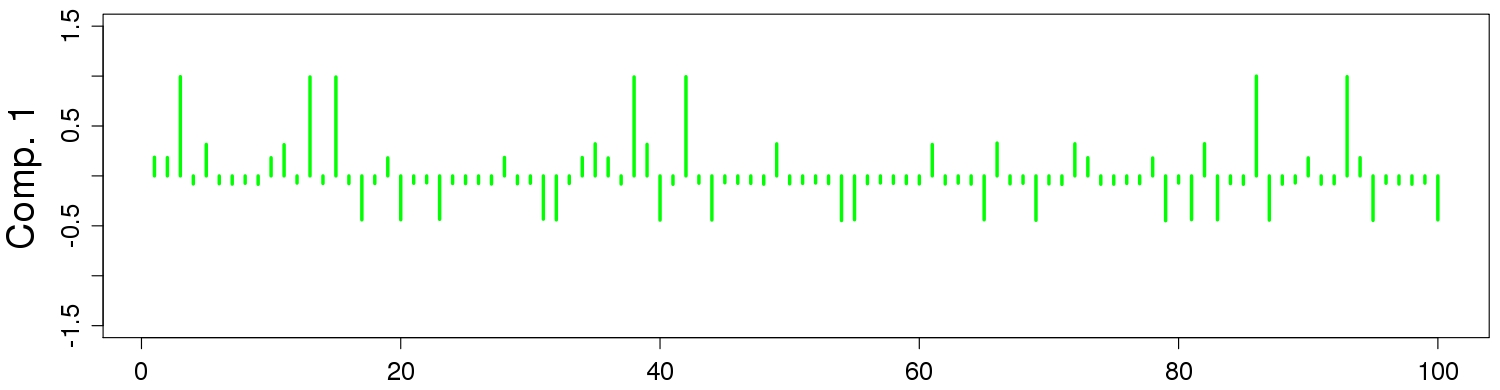}}\\%
&\includegraphics[width=0.05\textwidth ]{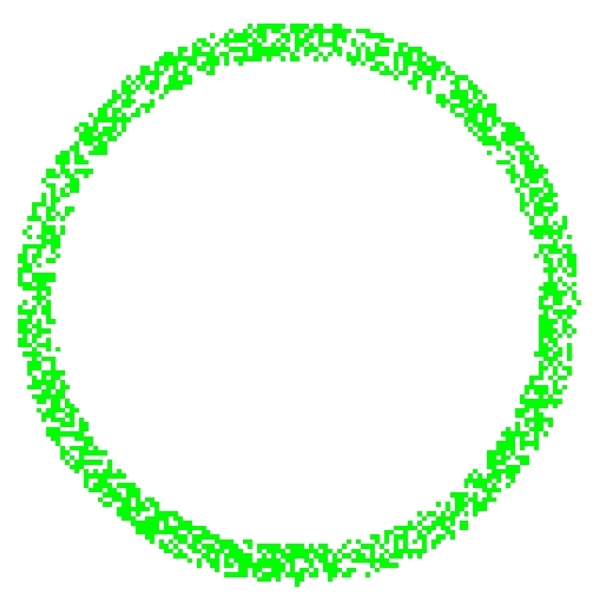}
&\includegraphics[width=0.05\textwidth ]{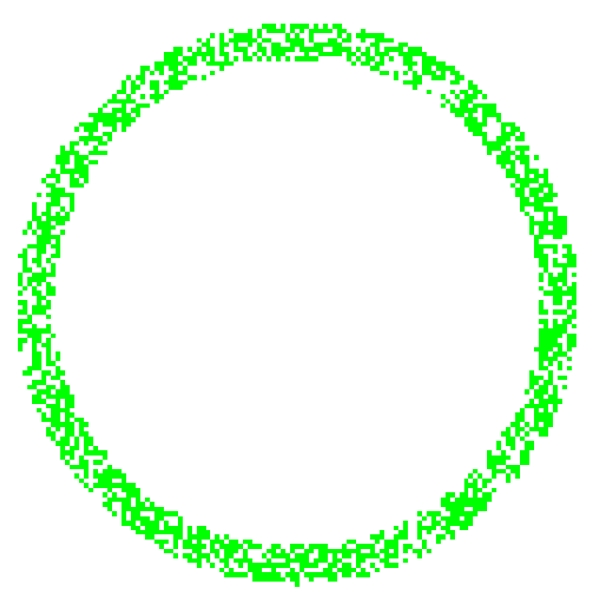}&\\
&\begin{footnotesize} $\text{bcor}_{+}=-1$ \end{footnotesize}&\begin{footnotesize} $\text{bcor}_{+}=-1$ \end{footnotesize}&\\
\end{tabular}
\caption{Time course and (thresholded) localization of the components detected using temporal ICA (left) and spatial ICA (right). Each ring indicates the localization of the component coded with the same color.  For each component, the binary  correlation coefficient of its time course with the corresponding initial signal is indicated. See Figure \ref{circle_simul2} for the correspondence with the exact position of the simulated signals. See text for the computation of the components localization.}
\label{circle_simul2_result}
\end{figure}

We then assigned each extracted component to the original signal corresponding to the computation of the highest absolute value of the binary correlation (see Figure \ref{circle_simul2_result}). For tICA, we found the following results:
\begin{itemize}
 \item Components 1 and 2 were assigned with the temporal signal of source 1, with a binary correlation equal respectively to +1 and -1. As there was a conflict between the spatial localization given by these two components, we computed an ``energy" index as follows. Let $\tilde{C}_1$ and $\tilde{C}_2$ be the two parts selected from the components 1 and 2 respectively. We then divide $\tilde{C}_1$ and $\tilde{C}_2$ respectively by $\underset{1\leq t\leq T}{\max}\tilde{C}_1$ and $\underset{1\leq t\leq T}{\max}\tilde{C}_2$ to obtain $C_1^*$ and $C_2^*$.  Then we threshold $C_1^*$ and $C_2^*$  using the threshold $t_{12[1]}$ which is equal to half the empirical quantile of order 0.91 of the temporal signal $|C_1^*|+|C_2^*|$. The ``energy'' of component 1 versus component 2 to explain the source 1 is then given by the sum of the values in $|C_1^*|$ above $t_{12[1]}$. Similarly, the ``energy'' associated with component 2 is given by the sum of the values in $|C_2^*|$ above $t_{12[1]}$. The ``energy" index was higher for component 2 (ratio of 0.58) which was consequently assigned to the temporal signal of source 1.
 
 \item Components 3 and 4 were assigned with the temporal signal of source 4, with a binary correlation equal respectively to -1 and +1. Here again, we computed the ``energy" index which was higher for component 3 (ratio of 2.5) thus assigned to the temporal signal of source 4. 
\item Components 1 and 4 were consequently not associated with any source. 
\end{itemize}

For sICA, we found the following results:
\begin{itemize}
 \item Component 1 was assigned with the temporal signal of source 4, with a binary correlation equal to -1. 
 \item Component 2 was assigned with the temporal signal of source 2, with a binary correlation equal to +1.
 \item Component 3 was assigned with the temporal signal of source 3, with a binary correlation equal to -1. 
 \item Component 4 was assigned with the temporal signal of source 1, with a binary correlation equal to -1.
\end{itemize}

Surprisingly, spatial ICA works better in this case as compared to temporal ICA. We checked, using the \R package \pkg{IndependenceTests}, the independence of our original random sequences of Bernoulli trials (note that this package can also check the independence of variables that are singular with respect to the Lebesgue measure). There were no reason to significantly reject this independence hypothesis (at 5\% level). On the other side, the temporal extracted components were significantly dependent. A possible explanation to the tICA failure (notwithstanding the fact that standard ICA model is only defined for continuous random variables, since the unmixing and mixing matrix coefficients are real numbers and thus are not constrained in anyway to give binary values) may be the use of kurtosis in the FastICA algorithm, a quantity which is not optimal for sequences of Bernoulli trials (see \cite{Himberg01}).

\subsubsection{Traveling wave simulation}

We generated several sinusoids with the same fundamental frequency $f$=1/16~Hz and various phases to simulate traveling activation waves. The resulting data set consisted in a sequence comprising 240 3D-images. Each image ($128\times128\times3$ voxels) was composed of four \textbf{partially overlapping} and concentric tubes.  
Each tube contained a pure sinusoidal signal (with a frequency $f$ equal to 1/16~Hz) in its non overlapping part and a sum of two pure sinusoidal signals in its parts that intersect with another tube.  For pure signals, different phases were considered, namely $\phi_1=0$, $\phi_2=\pi/4$,  $\phi_3=\pi/2$ and $\phi_4=3\pi/4$ from the tube at the center to the one at the periphery respectively. The background, which overlaps the tube at the periphery, contained, in its non overlapping part, a Gaussian noise (sd=0.2), see Figure \ref{circle_simul3}.  Thus, four pure signals and four mixed signals were present. To be realistic, we also added everywhere a Gaussian noise (sd=0.1).\\

\begin{figure}[h!]
\centering
\begin{tabular}{cc}
\includegraphics[width=0.2\textwidth ]{SimulPierre/circle.jpg}&
\includegraphics[width=0.65\textwidth ]{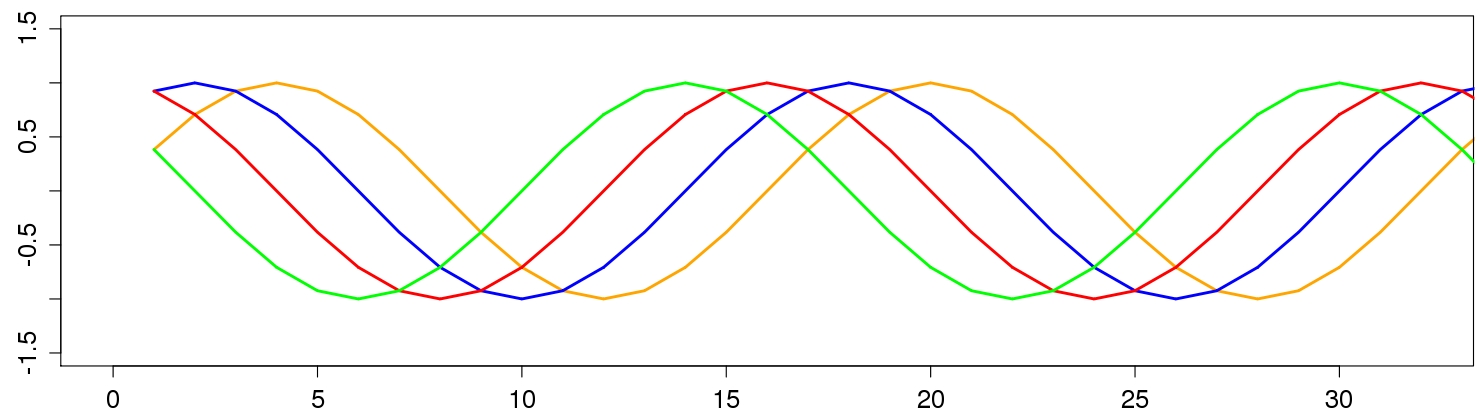}\\
\end{tabular}
\caption{Simulated data set. Left: A transverse slice of the volume. Each color indicates the localization of each signal. Pure signals are represented with a color (orange, blue, red and green),  Gaussian noise (sd=0.2) is in grey and mixed signals are in white. Right: Temporal course of the pure single signals written in their corresponding color ($f$=1/16~Hz, phases = 0, $\pi/4$, $\pi/2$, $3\pi/4$ respectively from the center to the periphery).}
\label{circle_simul3}
\end{figure}

Before going any further, it is convenient to think about a sinusoid waveform, which is deterministic in nature, as a sequence of different realizations of the same random variable $X=\sin(2\pi Uf+\phi)$, where $U$ is a continuous uniform random variable or, even better in the present case, a discrete uniform random variable on the sampled points. Note that, with standard algorithms, blind source separation is not concerned with the sequencing of the input signals. Indeed, changing the time ordering in which the mixtures are presented at the input will always lead to the same source separation (with the corresponding change in time indexing). This comment also applies to the two previous simulations. Note also that sinusoids with the same frequency but presenting different phases are in fact not independent. For example, correlation is not zero except for sinusoids with phase difference of $\pi/2$. Indeed, let $X=\sin(2\pi Uf+\phi_1)$ and $Y=\sin(2\pi Uf+\phi_2)$ be two random variables, where $U$ is a discrete uniform random variable with support $\{a,a+1,\ldots,b-1,b\}$, i.e. with characteristic function $\varphi_U(t)=\frac{e^{iat}}{n}\sum_{k=0}^{n-1}e^{ikt}$ where $n=b-a+1$. We have been able, after tedious computations, to obtain explicitly the covariance $\mathbb{C}\textrm{ov}(X,Y)=\mathbb{E}(XY)-\mathbb{E}(X)\mathbb{E}(Y)$ between $X$ and $Y$ by showing that $$\mathbb{E}(X)=\textrm{Im}\left[e^{i\phi_1}\frac{e^{ia2\pi f}}{n}\sum_{k=0}^{n-1}e^{ik2\pi f}\right]=\frac{1}{n}\sum_{k=0}^{n-1}\sin\left(\phi_1+2\pi f(a+k)\right)$$ 
and
$$\mathbb{E}(XY)=\frac{1}{2}\left[\cos(\phi_1-\phi_2)-\frac{1}{n}\sum_{k=0}^{n-1}\cos\left(\phi_1+\phi_2+4\pi f(a+k)\right)\right].$$

Temporal and spatial ICA  extracted 3 components. As expected, temporal ICA extracted two components corresponding to sinusoids with phase difference of $\pi/2$. Spatial ICA does not impose any (independence) constraint on the extracted time courses. This is reflected in the results obtained for sICA.

\begin {figure}[h!]
\centering
\begin{tabular}{cc}
\includegraphics[width=0.4\textwidth ]{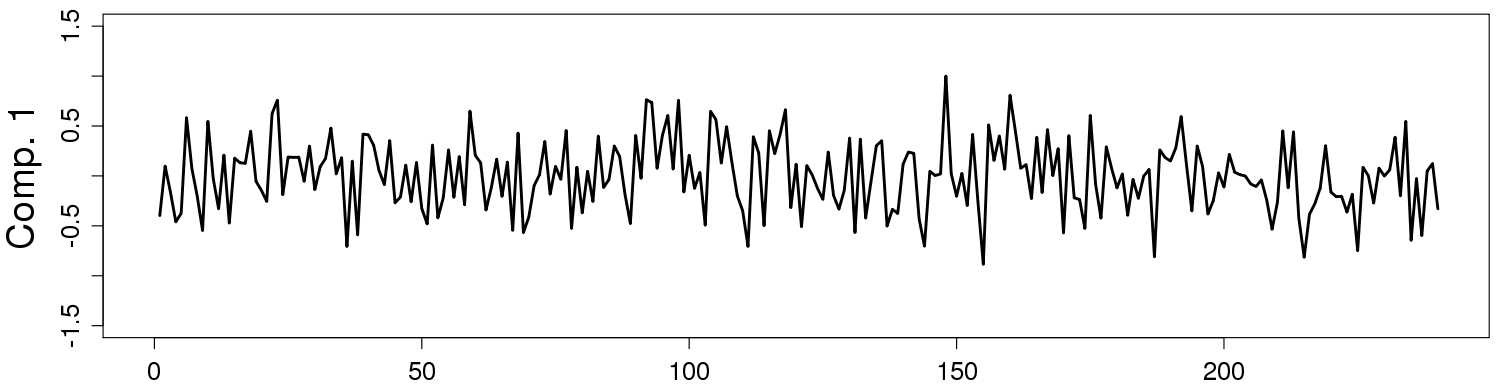}&
\includegraphics[width=0.4\textwidth ]{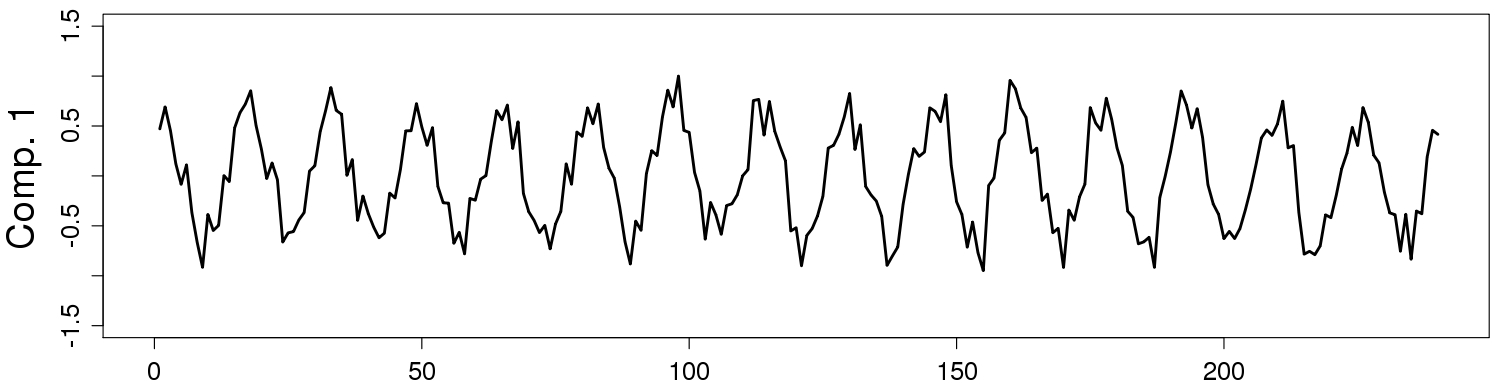}\\
 & $f = 1/16~Hz$ \\
\includegraphics[width=0.4\textwidth ]{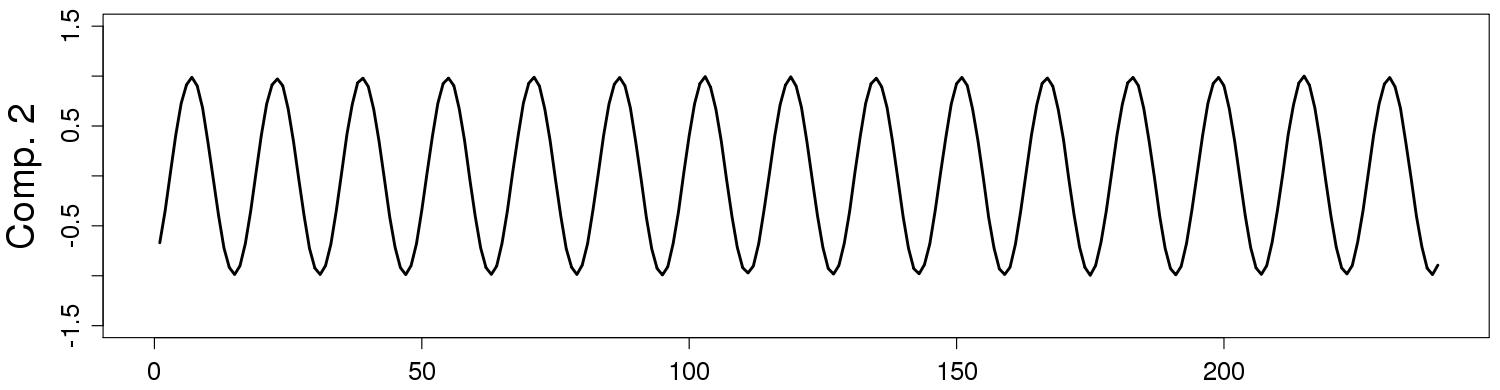}&
\includegraphics[width=0.4\textwidth ]{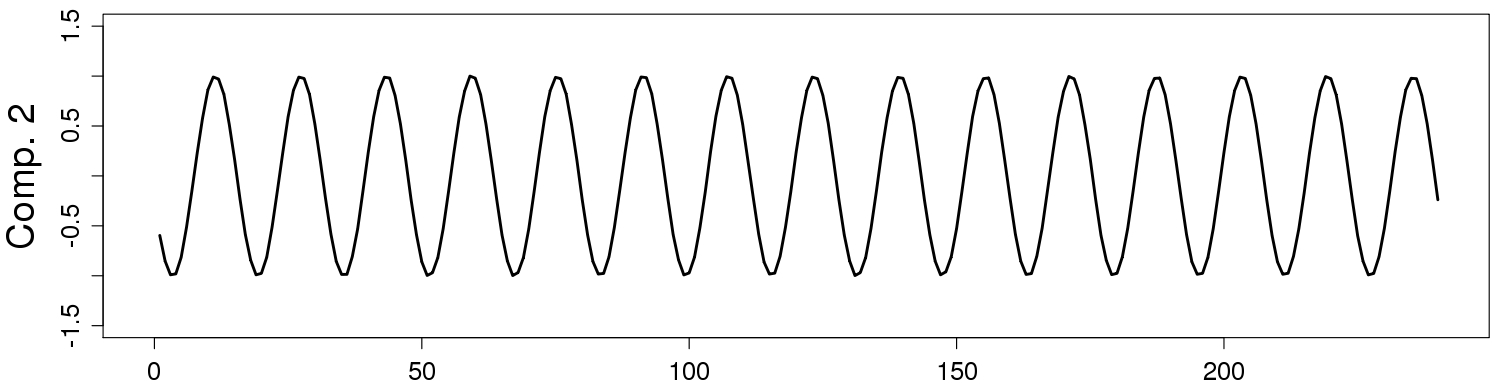}\\%
$f = 1/16~Hz$   &  $f = 1/16~Hz$ \\
\includegraphics[width=0.4\textwidth ]{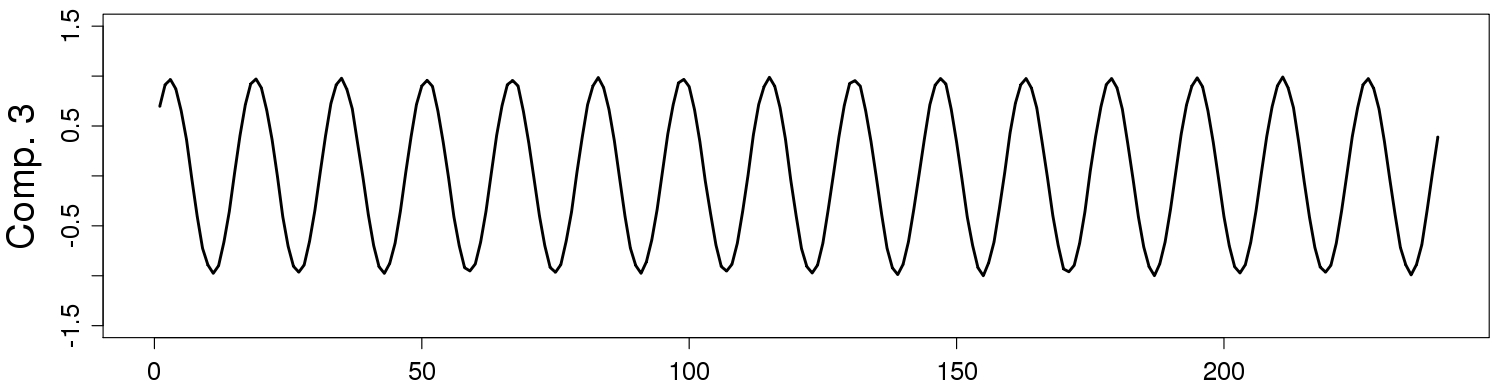}&
\includegraphics[width=0.4\textwidth ]{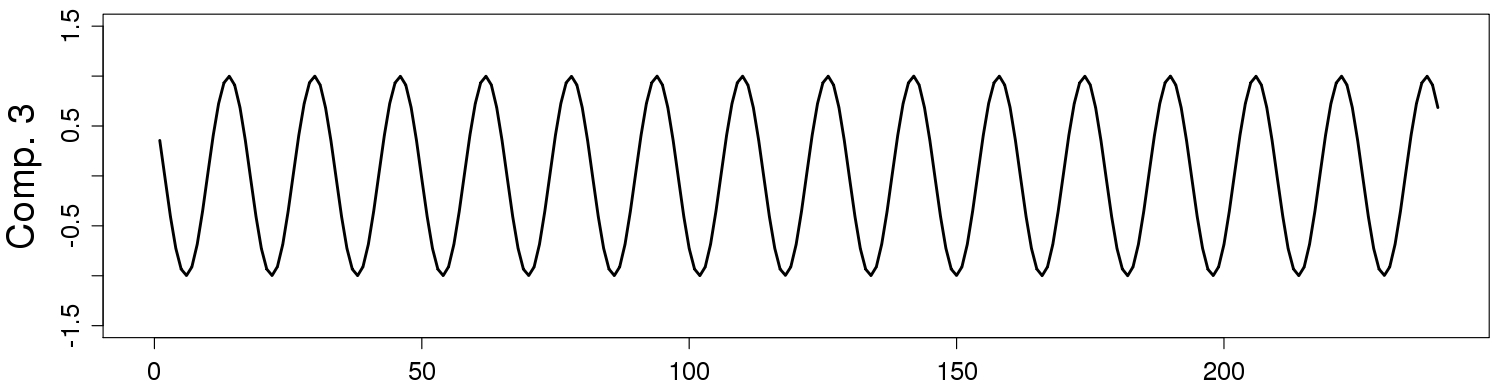}\\
$f = 1/16~Hz$ & $f = 1/16~Hz$ \\
\end{tabular}
\caption{Time course of the components detected using temporal ICA (left) and spatial ICA (right). For tICA, the phase difference between components 3 and 2 is $\pi/2$. The first component (top left) represents a noise signal. For sICA, the phase difference between the temporal signal of source 1 and, respectively, the time courses of components 1, 2 and 3 are: 1.037, 3.377 and 2.385. Note that ICA cannot recover the sign of the sources, so the phases of extracted sinusoidal signals are only defined modulo $\pi$. Note also that each component was normalized. }
\label{circle_simul3_result}
\end{figure}

\subsection{Real data sets}
We conducted two types of evaluations using real data sets coming from retinotopic mapping and color center mapping experiments. These data were part of a cognitive study investigating which color sensitive areas are specially involved with colors induced by synesthesia (\cite{Hupe10}). The real data sets used are provided as supplementary material.\footnote{The data provided should exclusively be used by the journal reviewers and readers to reproduce our examples. They cannot be used to any other purpose without the express authorization of the authors. }
\subsection*{Experiment 1 : Retinotopy mapping}
Retinotopic mapping of human visual cortex using fMRI is a well established method (\cite{Sereno95}; \cite{Warnking02}) that allows to properly delineate low visual areas. It uses four separate experiments with 4 periodic stimuli (an expanding/contracting ring and a rotating counter or anti-counter clockwise wedge) to measure respectively eccentricity and polar angle maps. For this study, we only used functional MRI data corresponding to the expanding ring experiment (240 volumes acquired each 2 seconds). The periodic visual stimulus expanded from 0.2 to 3 degrees in the visual field during 32 seconds and was repeated fifteen times. This periodic stimulation generated a wave of activation in the retinotopic visual areas (\cite{Engel94}), located in the occipital lobe, at the frequency of 1/32 Hz  measured at a discrete temporal sampling of 2 seconds (equivalent to 1/16 temporal bins).
After IC analysis of these functional data, using our \R function \code{f.icast.fmri.gui()}, 18 and 15 components were automatically extracted respectively with tICA and sICA. In this experiment, we searched for components corresponding to cortical activation at the frequency of the visual stimulation. tICA and sICA extracted more (noisy) components than the ones specific to the stimulus. Indeed, the main problem with fMRI data is that each activated voxel of each volume contains a mixture of the signal of interest (BOLD effect) with several confound signals with several origins: ocular movement, heart rate, respiratory cycle, or head movement. 
Figure \ref{retino_components} shows the temporal and spatial components at the frequency of the visual stimulation corresponding to the cortical activation of interest.  The computed phases are (approximatively) respectively equal to  $\pi/8$, $3\pi/8$ and $5\pi/8$ for tICA (green, red and blue components) and $\pi/8$ and $5\pi/8$ for sICA (green and red components). Figure \ref{retino_components}  displays on the corresponding anatomical image the cortical localization of these extracted components. For tICA and for each temporal component, this is done by selecting in the associated column of the estimated mixing matrix (see  equation (\ref{decompX})) the most active voxels, defined arbitrarily (see \cite{Beckmann04} for another approach) as those whith a value above the 95\% quantile (in absolute value). For sICA, we also thresholded arbitrarily each component at the 95\% quantile. Based on the retinotopy property of the visual system, the expanding ring generates a cortical activation wave moving from the posterior part to the anterior part of the occipital lobe. As indicated in Figure \ref{retino_components} the computed phases of the extracted components increase as expected from the posterior to the anterior part of the occipital lobe.  

\begin{figure}
\centering
\begin{tabular}{cccc}
 %Temporal ICA components & Spatial ICA components\\

\multirow{3}{*}{\includegraphics[width=0.2 \textwidth ]{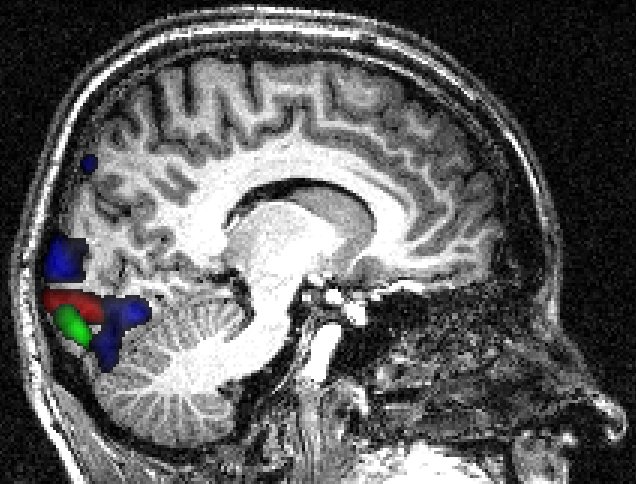}}
&\includegraphics[width=0.25\textwidth ]{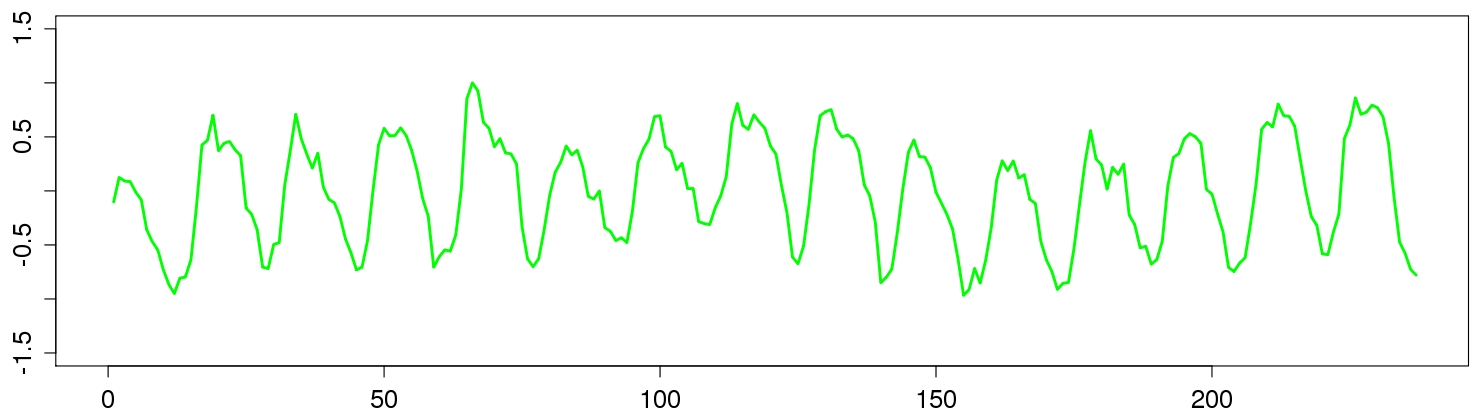}&
\multirow{3}{*}{\includegraphics[width=0.2 \textwidth ]{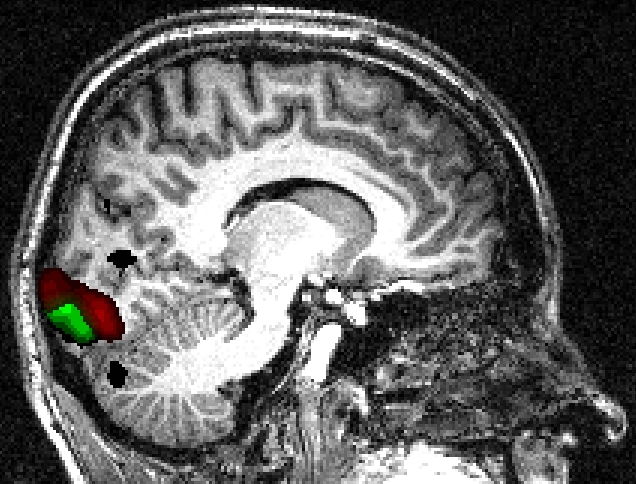}}&
 %3 extracted components at 16Hz & 2 extracted components at 16Hz\\
\includegraphics[width=0.25 \textwidth ]{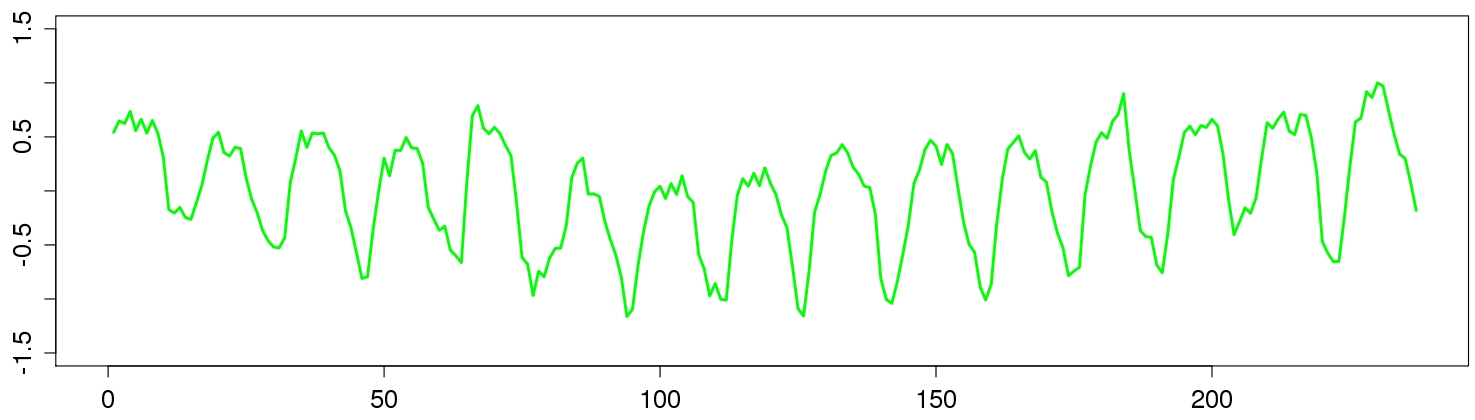}\\
&\includegraphics[width=0.25 \textwidth ]{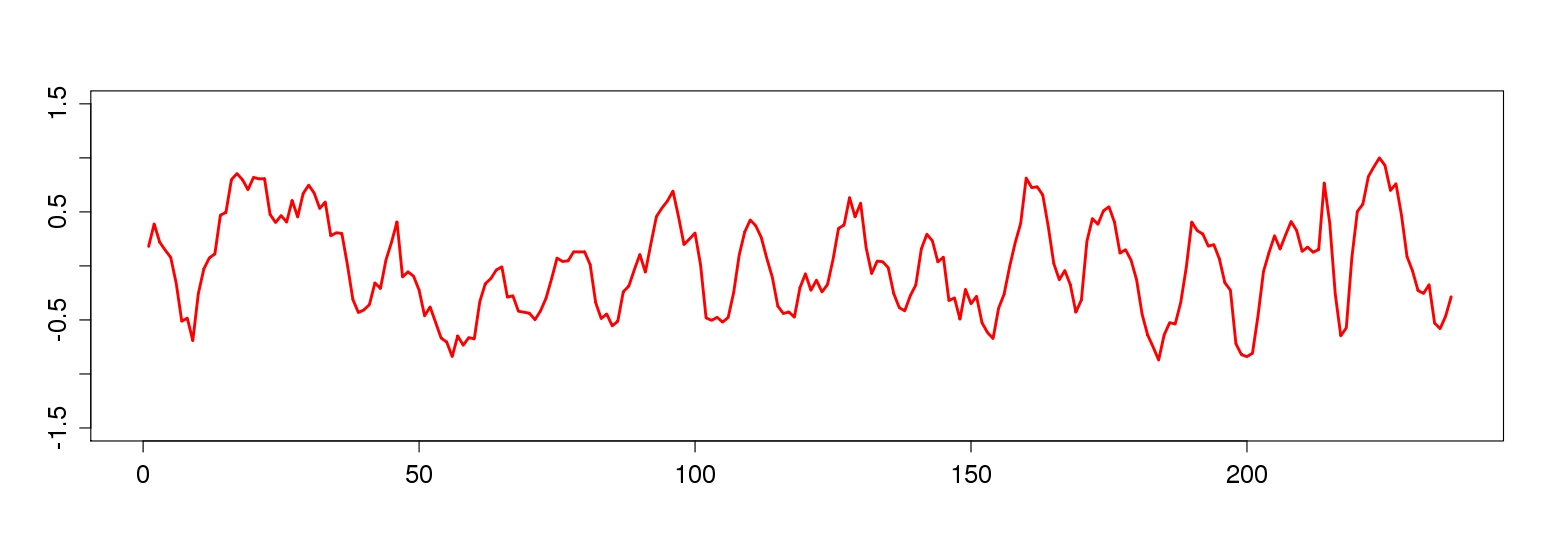}&
&\includegraphics[width=0.25 \textwidth ]{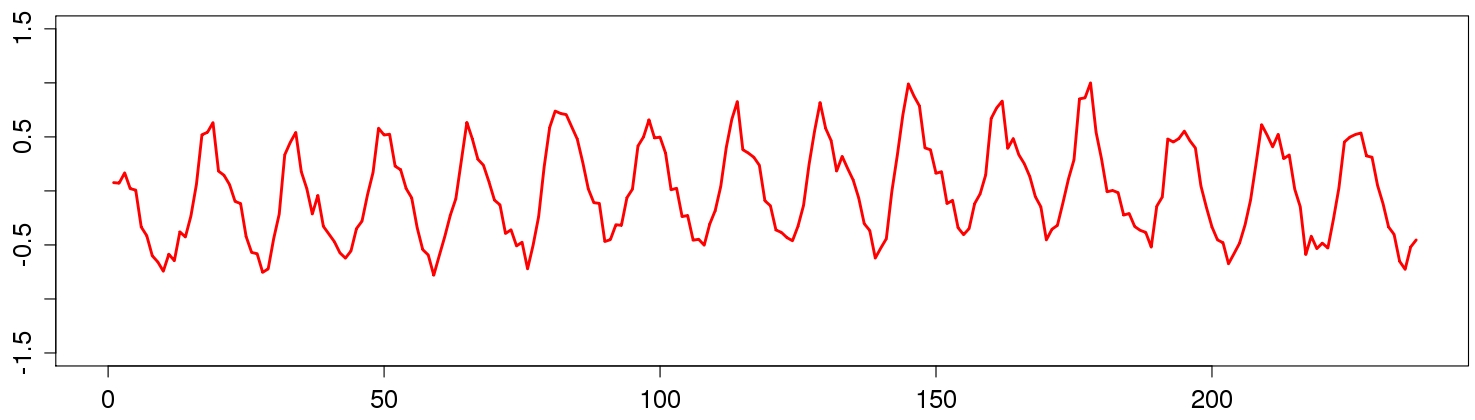}\\
&\includegraphics[width=0.25 \textwidth ]{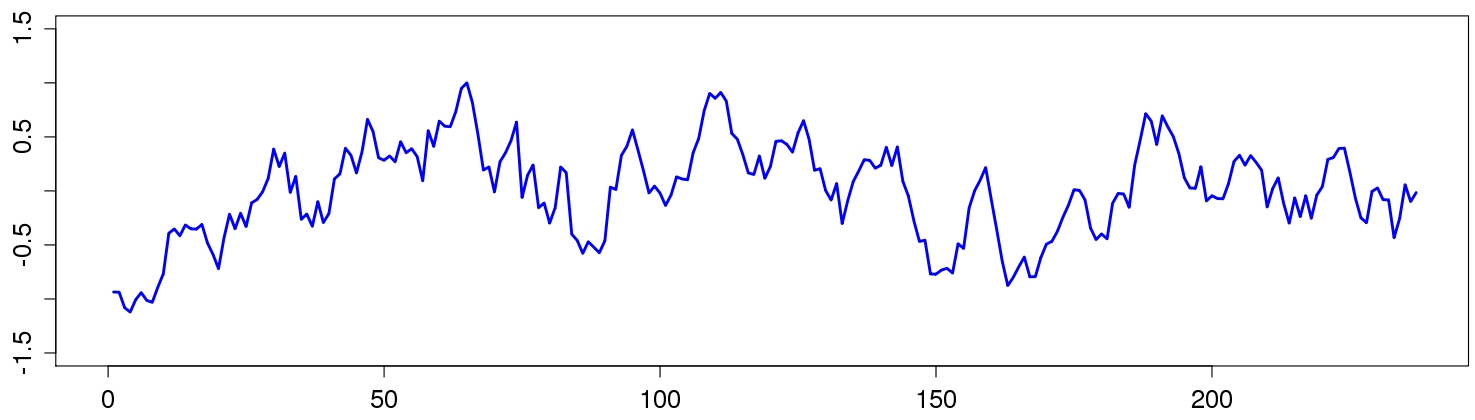}&&\\
\end{tabular}
\caption{Time course of extracted components and their spatial localization. Left: extracted components using tICA. The phases of these components are found to be (approximatively) $\pi$, $3\pi/8$, and $5\pi/8$ for green, red and blue components respectively. Right: extracted components using sICA. The phases of these components are found to be (approximatively) $\pi/8$ and $5\pi/8$ for green and red components respectively. On the anatomical MR scan (sagittal view) is indicated the localization of the most activated voxels for each component displayed with the corresponding color. The sequencing of the activated voxels follows the retinotopic property of the visual system: the periodic visual stimulation, an expanding ring, generates a periodic cortical activation moving from the posterior to the anterior part of the occipital lobe.}
\label{retino_components}
\end{figure}

\subsection*{Experiment 2 : Color center mapping}
In this experiment, we presented to the subject two stimuli, a set of chromatic rectangles (Mondrian like patterns) and the same patterns in an achromatic version (Figure \ref{mond_results}).  Each chromatic and achromatic sets of rectangles were periodically presented during successive blocks of 10 seconds. Our analysis was made on 120 functional volumes acquired each 2 seconds. Because we were only interested in visual areas, we used a mask to select only the occipital part of each volume. Using our \R function \code{f.icast.fmri.gui()}, 21 and 20 components were automatically extracted with tICA and sICA respectively. As shown in Figure \ref{mond_results}, we found both with tICA and sICA one periodic component at the frequency of the stimulus. 

\begin{figure}[h!]
\centering
 \begin{tabular}{cp{0.5cm}ccc}
%\multicolumn{3}{c}{Original Stimuli} & Temporal ICA component & Spatial ICA component\\
\includegraphics[width=0.1\textwidth ]{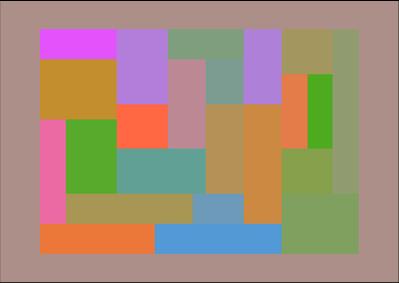}&vs&\includegraphics[width=0.1\textwidth ]{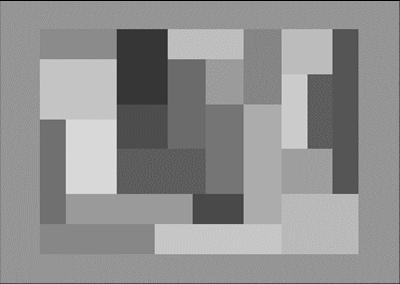} &  & \\
\multicolumn{3}{c}{\includegraphics[width=0.3 \textwidth ]{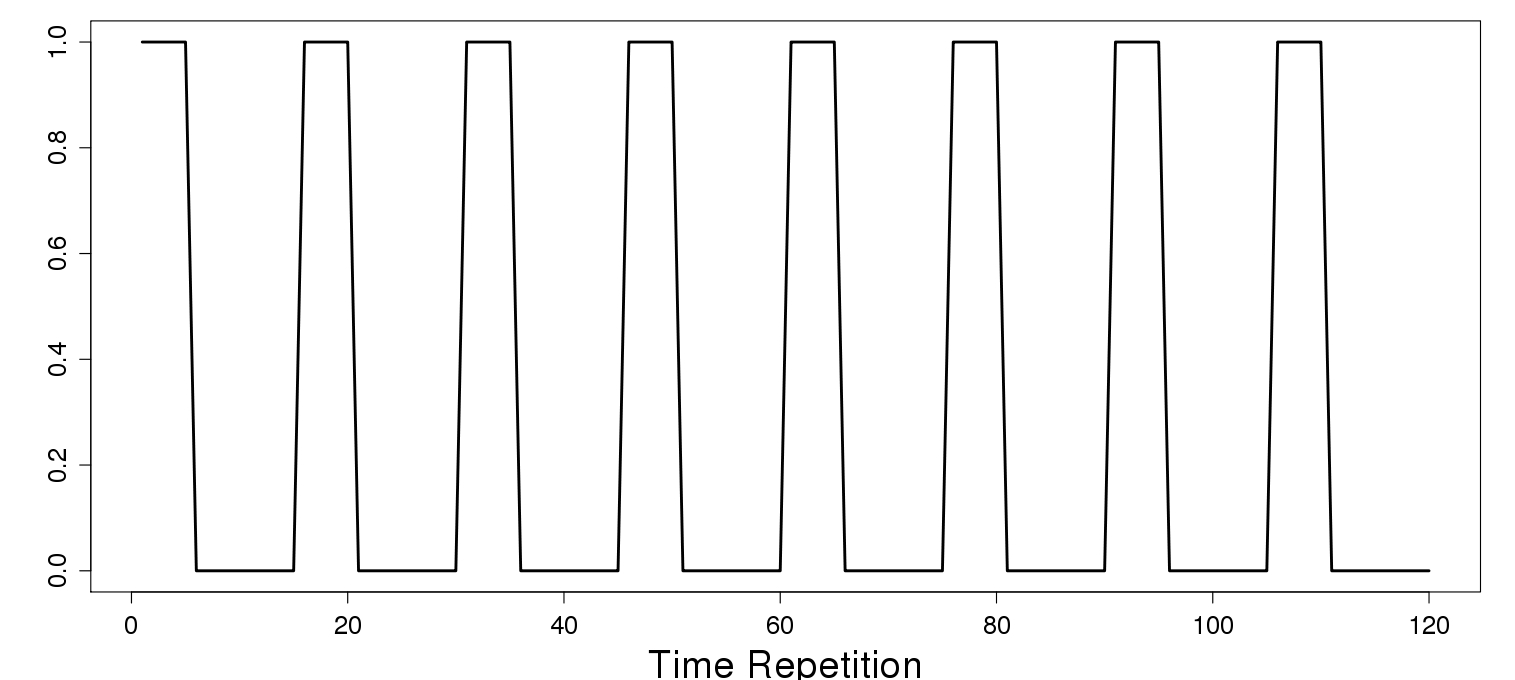}}&
\includegraphics[width=0.3 \textwidth ]{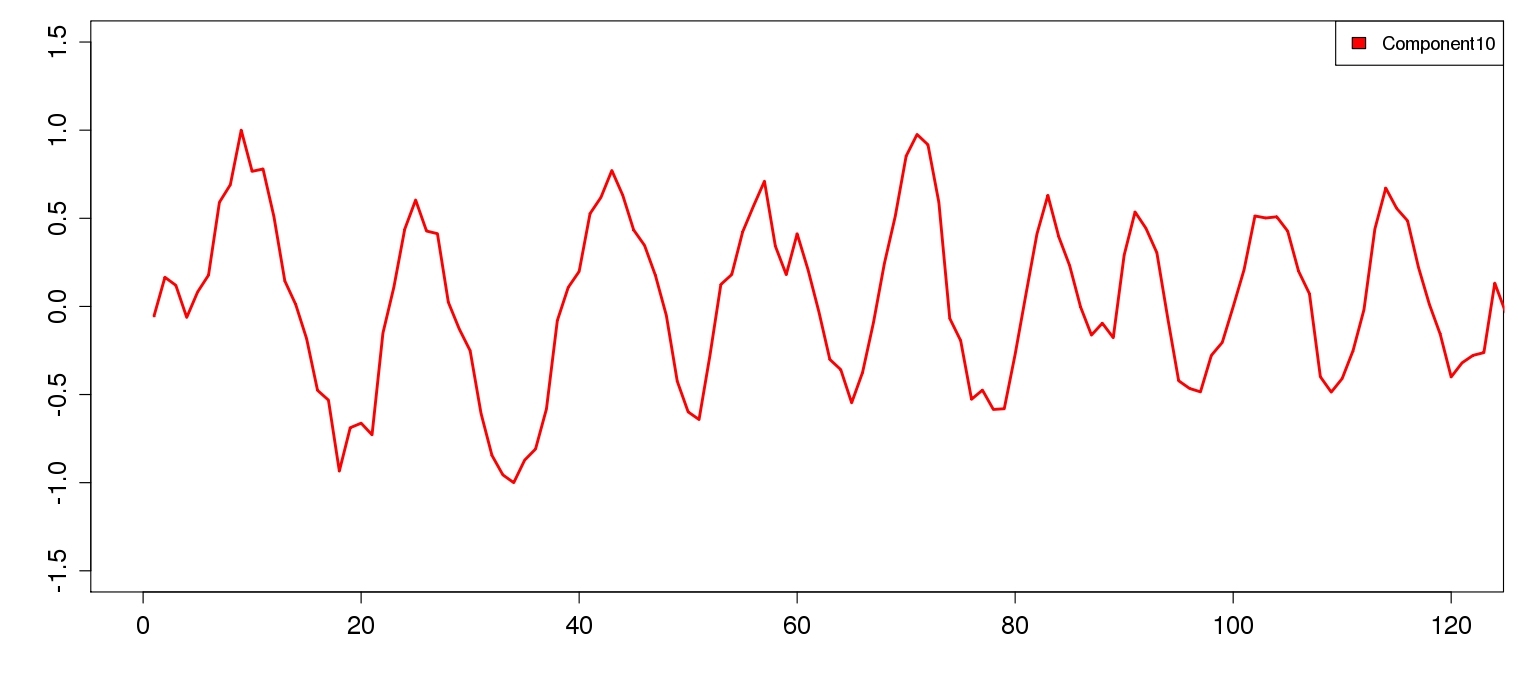}&
\includegraphics[width=0.3 \textwidth ]{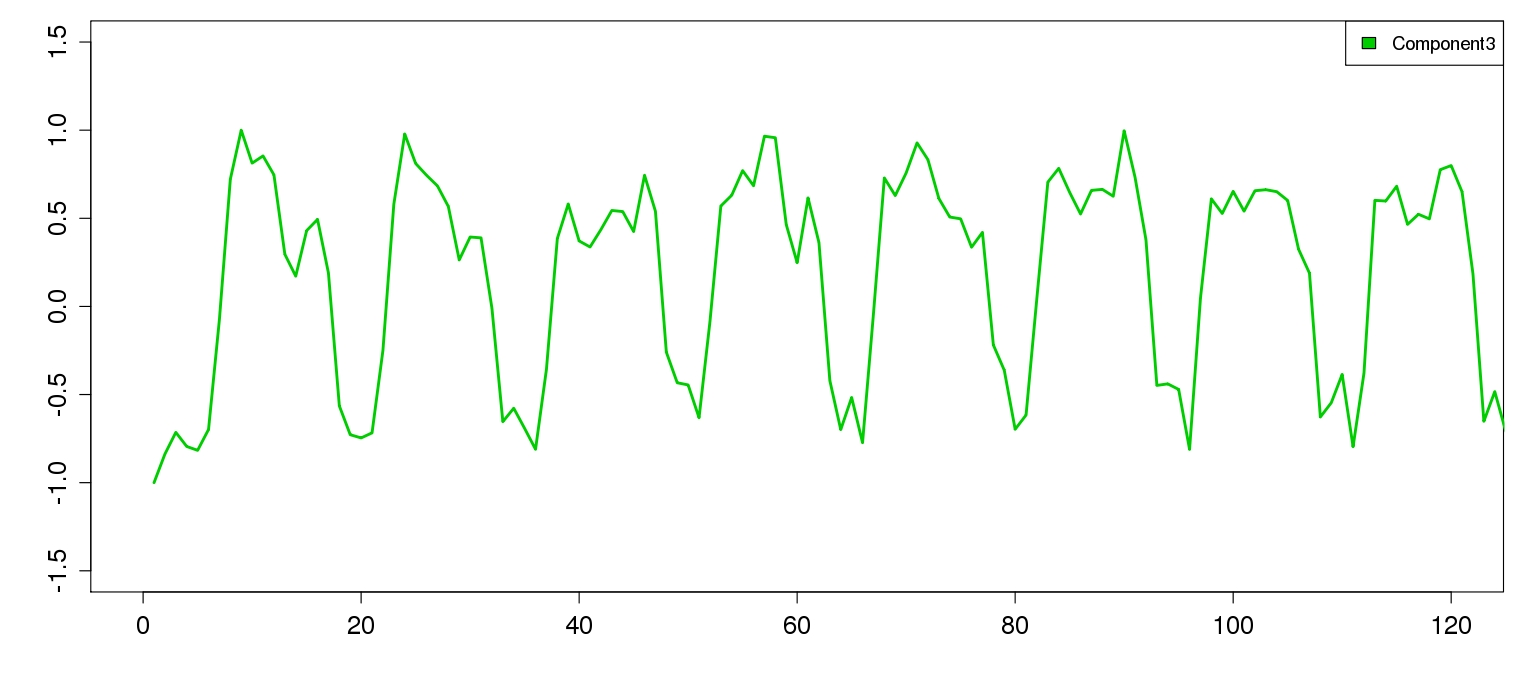}\\
\end{tabular}
\caption{Color center mapping. Left: Visual stimulation alternated blocks presenting chromatic and achromatic version of Mondrian like patterns. Middle : Component at the frequency of the stimulation of the visual system extracted using temporal ICA. Right: Component at the frequency of the stimulation of the visual system extracted using spatial ICA.}
\label{mond_results}
\end{figure}
As shown in Figure \ref{overlays}, the voxels containing the components extracted using temporal and spatial ICA were localized as expected in the same ventral cortical region, called V4-V8, known to be sensitive to color perception (\cite{Wade02}).

\begin{figure}[h!]
\centering
\includegraphics[width=0.7\textwidth ]{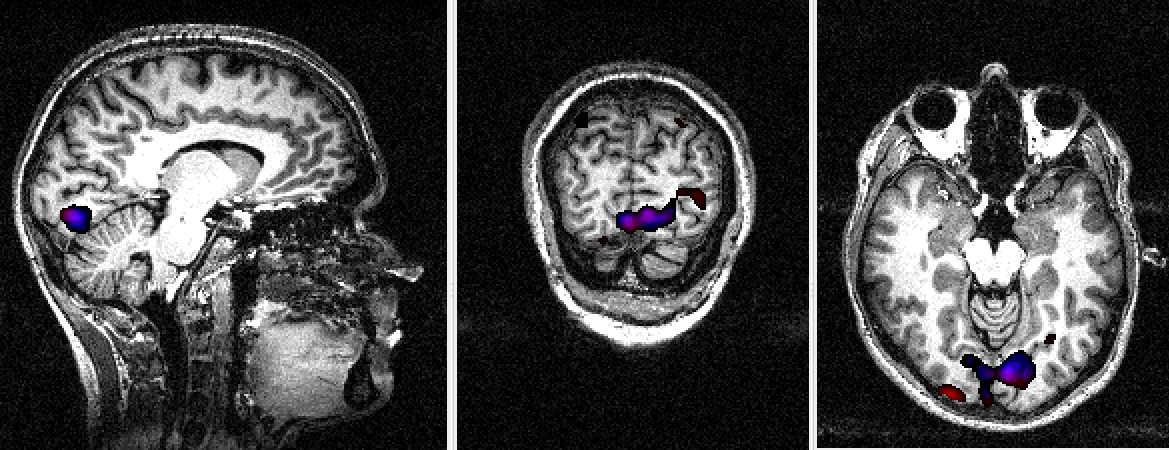}
\caption{Localization of the most active voxels for the extracted component at the stimulus frequency using temporal (red) and spatial ICA (blue). Anatomical view: Left: sagittal; Middle: Coronal; Right: Transverse. There is a strong overlap between the voxels corresponding to temporal and spatial analysis. As expected, they are positioned in a cortical region known to be color sensitive.}
\label{overlays}
\end{figure}

\section{Discussion}
In addition to the standard model-driven approach (GLM),  where the time course of the stimulus pattern is convolved with a hemodynamic response function and 
used as a predictor to detect brain activation, the analysis of fMRI signals could benefit from data-driven approaches such as ICA. 
ICA seems a powerful method to reveal brain activation patterns with a good temporally and spatially accuracy or to extract noise components from the data  (\cite{McKeown06}).
The strength of ICA is its ability to reveal hidden spatio-temporal structure without the definition of a specified \textit{a priori} model. Since its first application to fMRI data analysis (\cite{McKeown95}), ICA have been used in various brain function studies. For example, ICA was successfully applied to investigate the cortical networks related to natural multimodal stimulation (\cite{Malinen07}) or natural viewing conditions (\cite{Bartels04}); situations in which activity is present in various brain sites and no \textit{a priori} knowledge about the spatial location or about the activity waveforms were available. In (\cite{Bartels04}), sICA allowed to segregate a multitude of functionally specialized cortical and subcortical regions because they exhibit specific differences in the activity time course of the voxels belonging to them. In (\cite{Seifritz02}), tICA revealed un-predicted and un-modeled responses in the auditory system.
Following (\cite{Calhoun01b} or \cite{Malinen07}) GLM-derived activations are spatially less extensive and comprised only sub-areas of the ICA detected activations. ICA can both detect responses that are consistently and transiently task-related while GLM is restricted to the former (\cite{McKeown95}, \cite{Hu05}). \\

A number of ICA approaches have been proposed for fMRI data analysis. A comparison of some algorithms for fMRI analysis can be found in  (\cite{Correa07}). There are two largely used Matlab toolboxes, GIFT (\url{http://www.nitrc.org/projects/gift/}) implementing the FastICA algorithm (\cite{Hyvarinen99}), which maximizes the non-gaussianity of estimated sources and JADE, which relies on a joint approximate diagonalization of eigenmatrices (\cite{Cardoso93} ). Probabilistic ICA (PICA) is embedded in the FSL package (\cite{Beckmann04}), a library of tools for neuroimaging data analysis (\url{http://www.fmrib.ox.ac.uk/fsl/}). 
In this paper, we propose a new version of the \R package \pkg{AnalyzeFMRI}, dedicated to the fMRI data analysis, for temporal and spatial IC analysis. We reused, with some memory improvements, the implementation of the FastICA algorithm  proposed in the \R package \pkg{fastICA}. Essentially for tractability considerations, spatial ICA is generally used in the context of neuroimaging. However, the temporal independence of sources can be supposed in some applications. In this case, only a small part of the brain is considered (\cite{Calhoun01b}; \cite{Seifritz02}). We have shown using a classical linear algebra result that temporal ICA can be tractable on large fMRI data sets. Based on simulated data and real functional data sets, we have demonstrated the applicability of the package proposed for spatial and temporal ICA. As we have seen with the traveling wave case, sinusoids with the same frequency but presenting different phases are not independent and then can not be extracted using ICA. A possible solution to this problem would be to use a least square approach by imposing a strong \textit{a priori} on the sources: the $i^{th}$ source is $S_i(t)=\sin(2\pi fU_t+\phi_i)$ where the frequency $f$ is supposed to be known.
We then search estimated sources $Y_1,...,Y_m$ under this specific form that can be written as a linear combination of the observed signals: $Y_i=a_{i1}X_1(t)+a_{i2}X_2(t)+\ldots+a_{im}X_m(t), t=1, \ldots,n$. The least square problem to optimize (numerically) is then
$$\sum_{t=1}^n \left(\sin(2\pi fU_t+\phi_i)-a_{i1}X_1(t)+a_{i2}X_2(t)+\ldots+a_{im}X_m(t)\right)^2, i=1, \ldots,m.$$
Note that we could also differentiate with respect to the $\phi_i$'s and the $a_{ij}$'s to simplify the computation.

Several extensions should be inserted in the future.  The package should be extended for dealing with group studies. Indeed, ICA generates a large number of components for each subject and obviously larger for a cohort of subjects.
Several methods have been proposed for dealing specifically with group studies (\cite{Svensen02}; \cite{Esposito05}; \cite{Varoquaux10})  and to facilitate the identification of components which are spurious or reproducible (\cite{Himberg04}; \cite{Cordes07}; \cite{Vigario08}; \cite{Wang08}). The sorting of relevant components can be performed using several indexes such as correlation coefficient with a reference function (\cite{Hu05}) or power spectrum (\cite{Moritz03}). A possible improvement would be to use a more general measure of dependence as the one provided in \cite{Beran07}.

To resume, ICA is a powerful data-driven technique that allows neuroscientists to explore the intrinsic structure of data and to alleviate the need for explicit \textit{a priori} about the neural responses. We propose with the TS-ICA extension to the \R package \pkg{AnalyzeFMRI} a robust tool for the application both of spatial and temporal ICA to fMRI data.

\section*{Acknowledgments}
C\'ecile Bordier is recipient of a grant from Institut National de la Sant\'e et de la Recherche Scientifique (INSERM). 
Pierre Lafaye de Micheaux is recipient of a grant from the Natural Sciences and Engineering Research Council (NSERC) of Canada. The authors would also like to thank Professor Christian Jutten for many helpful comments.

\bibliography{biblio-jss}

\end{document}